\documentclass{amsart}

\usepackage{bm}
\usepackage{amsmath,amsthm,amsmath}
\usepackage{algorithm, pseudocode}
\usepackage{mathrsfs}
\usepackage{enumitem}
\usepackage{fullpage}
\usepackage{xcolor}
\usepackage[bookmarksopen=false,pdftex=true,breaklinks=true,%
      plainpages=false,%
      hyperindex=true,pdfstartview=FitH,colorlinks=true,%
      pdfpagelabels=true,colorlinks=true,linkcolor=blue,%
      citecolor=red,urlcolor=blue,hypertexnames=false%
      ]%
   {hyperref}

\usepackage{multirow}

\def\headline#1{\hbox to \hsize{\hrulefill\quad\lower.3em\hbox{#1}\quad\hrulefill}}
\def\hline#1{\hbox to \hsize{\line(1,0){10}\quad \lower.3em\hbox{$#1$}\quad \line(1,0){10}}}

\definecolor{ao(english)}{rgb}{0.0, 0.5, 0.0}

\date{}

\newcommand{\X}{\mathbf{X}}
\newcommand{\Ww}{\mathcal{W}}
\newcommand{\Ss}{\mathcal{S}}
\newcommand{\Weyl}{\mathbf{W}}
\newcommand{\x}{\mathbf{x}}
\newcommand{\y}{\mathbf{y}}
\newcommand{\z}{\mathbf{z}}
\newcommand{\HH}{\mathbf{H}}
\newcommand{\Coxeter}{\mathcal{C}}
\newcommand{\R}{\mathbb{R}}
\newcommand{\W}{\mathcal{W}}
\def\N{\ensuremath{\mathbb{N}}}
\def\Q{\ensuremath{\mathbb{Q}}}

\DeclareMathOperator{\Comp}{Comp}
\newcommand{\comp}{\mathscr{C}}
\DeclareMathOperator{\sign}{sign}
\DeclareMathOperator{\thom}{Thom}
\DeclareMathOperator{\der}{Der}
\def\f{\ensuremath{F}}
\def\scrQ{\ensuremath{\mathscr{Q}}}

\DeclareBoldMathCommand{\p}{p}
\DeclareMathOperator{\RM}{RM}

\def\softO{\ensuremath{{O}{\,\tilde{ }\,}}}

\newcommand{\abb}[5]{%
\setlength{\arraycolsep}{0.4ex}%
\begin{array}{rcccc}%
#1 &:\,& #2 & \,\,\longrightarrow\,\, & #3 \\[0.5ex]%
     & & #4 & \longmapsto & #5%
\end{array}%
}

\newtheorem{definition}{Definition}
\numberwithin{definition}{section}
\newtheorem{theorem}[definition]{Theorem}
\newtheorem{corollary}[definition]{Corollary}
\newtheorem{proposition}[definition]{Proposition}
\newtheorem{lemma}[definition]{Lemma}

\newtheorem{example}[definition]{Example}
\newtheorem{notation}{Notation}[section]

\newtheorem{remark}[definition]{Remark}

\DeclareMathOperator{\CompMax}{CompMax}

\title[Deciding connectivity in Symmetric Semi-Algebraic Sets]{Deciding Connectivity in Symmetric Semi-Algebraic Sets} 
\thanks{C.\ Riener and R.\ Schabert were supported by the Troms\o{} Research Foundation grant 17MATCR.  A part of this work  was completed when T. X. Vu was working at the Department of Mathematics and Statistics, UiT The Arctic University of Norway, Tromsø, Norway, and at the Institute for Algebra, Johannes Kepler University, Linz, Austria. During that time, she was partially supported by the ANR-FWF grant 10.55776/I6130.
}

\author{Cordian Riener}
\address{Department of Mathematics and Statistics, UiT - the Arctic University of Norway, 9037 Troms\o, Norway}
\email{cordian.riener@uit.no}
\author{Robin Schabert}
\address{Department of Mathematics and Statistics, UiT - the Arctic University of Norway, 9037 Troms\o, Norway}
\email{robin.schabert@uit.no}
\author{Thi Xuan Vu}
\address{Univ. Lille, CNRS, Centrale Lille, UMR 9189 CRIStAL, F-59000 Lille, France}
\email{thi-xuan.vu@univ-lille.fr}

\begin{document}
\begin{abstract}
A \textit{semi-algebraic set} is a subset of $\mathbb{R}^n$ defined by a finite collection of polynomial equations and inequalities. In this paper, we investigate the problem of determining whether two points in such a set belong to the same connected component. We focus on the case where the defining equations and inequalities are invariant under the natural action of the symmetric group and where each polynomial has degree at most \( d \), with \( d < n \) (where \( n \) denotes the number of variables). Exploiting this symmetry, we develop and analyze algorithms for two key tasks. First, we present an algorithm that determines whether the orbits of two given points are connected. Second, we provide an algorithm that decides connectivity between arbitrary points in the set. Both algorithms run in polynomial time with respect to \( n \).
\end{abstract}

\maketitle

\section{Introduction}

A semi-algebraic set is a subset of \( \mathbb{R}^n \) that can be described using finite unions and intersections of polynomial inequalities of the form  
\[
\{ x \in \mathbb{R}^n \mid f(x) \geq 0 \},
\]
where \( f \) is a real polynomial. These sets naturally arise in real algebraic geometry and have numerous applications in both theoretical mathematics and computational problems.

One of the fundamental questions in the study of semi-algebraic sets is connectivity: given two points \( x, y \in S \), is there a continuous path within \( S \) that connects them? This problem has been extensively studied in algorithmic geometry and is central to applications such as robot motion planning \cite{schwartz1983piano}. In motion planning, connectivity determines whether a robot can transition between two configurations without leaving the feasible space, a setting where semi-algebraic sets naturally describe collision-free regions. Recently, the significance of roadmap algorithms has extended beyond classical robotics. Capco et al. \cite{capco2020robots, capco2023positive} demonstrated that roadmap-based methods can be effectively applied to kinematic singularity analysis, emphasizing the broader scope of connectivity problems in computational geometry. Additionally, the increasing trend toward numerical implementations (see, e.g., \cite{iraji2014nuroa}) highlights the practical necessity of developing robust connectivity algorithms with efficient computational performance.

In this article, we address the problem of deciding connectivity in symmetric semi-algebraic sets, that is, sets that are invariant under the action of the symmetric group \( S_n \). Such sets appear naturally in many-body physics, algebraic statistics, and computational invariant theory, where symmetries reduce complexity but also introduce new algorithmic challenges. Our main contribution is the development of a polynomial-time algorithm for deciding connectivity within a given symmetric semi-algebraic set. The key insight enabling this approach is a dimension reduction technique, which exploits the structure of symmetric polynomials to simplify the connectivity problem.

The remainder of this paper is organized as follows. In Subsection \ref{subsec:result}, we provide an overview of previous results on connectivity in semi-algebraic sets. Section \ref{sec:prelim1} introduces key preliminaries, including symmetric polynomials, the Weyl chamber, and Vandermonde maps. Section \ref{sec:prelim2} presents essential algorithmic background on methods that will be referenced throughout the paper.

In Section \ref{sec:graph}, we construct a framework that relates the connectivity of a union to connectivity in a bipartite graph. This construction, which follows directly from the Mayer–Vietoris exact sequence, plays a crucial role in our algorithms. Section \ref{sec:algo} presents the first main result of this paper, establishing a connection between connectivity in symmetric semi-algebraic sets and \emph{connectivity in orbits} (see Definition \ref{def:inorbt}). The latter is a weaker notion, as orbit connectivity is a consequence of general connectivity. We study this concept in detail and provide an algorithm for deciding connectivity in orbits. 

Section \ref{sec:general} introduces the notion of mirrored spaces and establishes a result that complements orbit connectivity to recover full connectivity. We then use these insights to present our main algorithm. Finally, we conclude with a discussion of open questions and future directions in the last section.

\subsection{Prior Results}
\label{subsec:result}
The connectivity problem has received significant attention; however, to the best of our knowledge, no previous work has taken the symmetry into account.

\medskip
\paragraph*{\em Computing Roadmaps and Deciding Connectivity.}
Schwartz and Sharir \cite{schwartz1983piano}  pioneered a solution based on Collins' cylindrical algebraic decomposition method. Their approach has a computational complexity that is polynomial in the degree $d$ and the number of polynomials $s$, but doubly exponential in the number of variables $n$.

Canny \cite{Canny} introduced the concept of a roadmap for a semi-algebraic set.  Roadmaps provide as a tool for counting connected components and verifying whether two points belong to the same connected component. Later improvements \cite{canny1993computing} enhanced this algorithm, enabling the construction of a roadmap for a semi-algebraic set defined by polynomials, where the sign-invariant sets provide a stratification of \( \mathbb{R}^n \), with a complexity of \( s^n (\log s) d^{O(n^4)} \). For arbitrary semi-algebraic sets, Canny applied perturbations to the defining polynomials, allowing the algorithm to determine whether two points lie in the same semi-algebraically connected component while maintaining the same complexity. However, this approach does not provide an explicit path between the points. Furthermore, a Monte Carlo version of the algorithm achieves a complexity of \( s^n (\log s) d^{O(n^2)} \).

Grigor'ev and Vorobjov \cite{grigor1992counting, canny1992finding} proposed an algorithm with complexity \( (sd)^{n^{O(1)}} \) for counting the connected components of a semi-algebraic set. Additionally, Heintz, Roy, and Solerno \cite{heintz1994single}, as well as Gournay and Risler \cite{gournay1993construction}, developed algorithms capable of computing a roadmap for any semi-algebraic set with the same complexity. Unlike Canny's algorithm, these approaches do not separate the complexity into a combinatorial part (depending on \(s\)) and an algebraic part (depending on \(d\)). Since a semi-algebraic set can potentially have up to \( (sd)^n \) connected components, the combinatorial complexity of Canny's algorithm is nearly optimal.

In \cite{basu2000computing}, a deterministic algorithm is introduced to construct a roadmap for any semi-algebraic set contained within an algebraic set of dimension \(k\), achieving a complexity of \(s^{k+1}d^{O(k^2)}\). This algorithm is particularly significant in robot motion planning, where the robot's configuration space is often modeled as a lower-dimensional algebraic set embedded in a higher-dimensional Euclidean space. A key advantage of this approach is that its complexity depends on the dimension of the algebraic set rather than the ambient space. The algorithm also achieves nearly optimal combinatorial complexity by utilizing only a fixed number of infinitesimal quantities, which reduces the algebraic complexity to \(d^{O(k^2)}\). Furthermore, it not only identifies connectivity but also computes a semi-algebraic path between input points if they belong to the same connected component, thereby providing the full scope of the problem.

However, none of the aforementioned algorithms achieves a complexity lower than \( d^{O(n^2)} \), nor do they produce a roadmap with a degree below \( d^{O(n^2)} \). Safey El Din and Schost \cite{safey2011baby} proposed a probabilistic algorithm that extended Canny's original approach to compute a roadmap for a closed and bounded hypersurface, with a complexity of \( (nd)^{O(n^{1.5})} \). Later, in \cite{basu2014baby}, the same authors, along with Basu and Roy, introduced a deterministic algorithm for computing a roadmap of a general real algebraic set, achieving a complexity of \( d^{O(n^{1.5})} \).

Basu and Roy, in \cite{basu2014divide}, introduced a divide-and-conquer strategy that reduces the dimension by half at each recursive step, producing a recursion tree of depth \( O(\log(n)) \). Their deterministic algorithm computes a roadmap for a hypersurface in time polynomial in \( n^{n\log^3(n)}d^{n\log^2(n)} \), with an output size polynomial in \( n^{n\log^2(n)}d^{n\log(n)} \). Notably, the algorithm's complexity is not polynomial in its output size. Furthermore, despite not assuming smoothness on the hypersurface, the algorithm can address systems of equations by considering the sum of squares of the polynomials.

Following the divide-and-conquer strategy, and under smoothness and compactness assumptions, Safey El Din and Schost \cite{din2017nearly} proposed a probabilistic roadmap algorithm with both output degree and running time polynomial in \( (nd)^{n\log(k)} \), where \( k \) denotes the dimension of the considered algebraic set. Note that, the algorithm in \cite{din2017nearly} achieves a running time that is subquadratic in the output size, with explicitly provided complexity constants in the exponent.

Recently, in \cite{prebet2024computing}, Pr{\'e}bet, Safey El Din, and Schost established a new connectivity result that extends the one in \cite{din2017nearly} to the unbounded case. In a separate work \cite{prebet2024part2}, they introduced an algorithm for computing roadmaps with the same complexity as that in \cite{din2017nearly}.

\medskip
\paragraph*{\em Systems of symmetric polynomials.} This paper explores the connectivity problem in cases where the polynomials describing a semi-algebraic set are symmetric and have low degree in relation to the number of variables. Several studies have highlighted the potential of leveraging the action of the symmetric group in such situations.

Timofte \cite{timofte2003positivity} established that for symmetric polynomials of lower degree, positivity can be deduced from positivity on low-dimensional test sets, which consist of points with at most half-degree distinct coordinates. Building on this foundation, \cite{riener2012degree, riener2016symmetric}, and \cite{riener2013exploiting} demonstrated that this approach can be widely applied to algorithmic methods for polynomial optimization, resulting in polynomial-time complexity. A similar approach was used by Basu and Riener to compute the equivariant Betti numbers of symmetric semi-algebraic sets \cite{basu2018equivariant}, the Euler-Poincaré characteristic \cite{basu2017efficient}, and the first \( \ell \) Betti numbers \cite{basu2022vandermonde} of such sets, achieving polynomial-time computation for fixed degree. Our approach is based on ideas inherent in these works as well, namely a reduction of dimension in the case of a degree which is lower than the number of variables. 

In more recent work, Faug{\`e}re, Labahn, Safey El Din, Schost, and Vu \cite{faugere2020computing}, along with Labahn, Riener, Safey El Din, Schost, and Vu \cite{labahn2023faster}, demonstrated that even when the degree is not fixed in advance, significant improvements in complexity can be achieved for certain algorithmic problems in real algebraic geometry.

\subsection{Our main results } The main results presented in this paper are described in Section \ref{subsec:result} and \ref{sec:general}. On the one hand, we present the first polynomial-time algorithm for deciding connectivity between two $S_n$-orbits of points in a symmetric semi-algebraic set. This algorithm is described in Algorithm \ref{algorithm:special} and its complexity is bounded by $O(n^{d^2})$ arithmetic operations in \(\Q\).  Building on this, we provide, on the other hand, the first polynomial-time algorithm for determining connectivity between two individual points in such a set. This algorithm is described in Algorithm \ref{algorithm:general} and has complexity $O\left(s^d n^{d^2} d^{O(d^2)} \right)$ arithmetic operations in \(\Q\).  Both of the quantities describing the complexity of our algorithms  are polynomial for fixed $d$.
\subsection{Relation to a previous version} This paper is an extended version of our previous work \cite{riener2024connectivity}, which appeared in the proceedings of the ISSAC 2024 conference. In this version, we provide more detailed explanations, expand upon our previous results, and introduce new findings.  In the conference version, we focused exclusively on the case where the points \( x = (x_1, \dots, x_n) \) and \( y = (y_1, \dots, y_n) \) were in canonical form, meaning they satisfied the ordering constraints \( x_1 \leq x_2 \leq \dots \leq x_n \) and \( y_1 \leq y_2 \leq \dots \leq y_n \). In contrast, the current paper shows how to remove this restriction and provides an algorithm for  general real points \( x \) and \( y \), thereby broadening the applicability of our approach. In particular, Section \ref{sec:general} introduces new results, including an algorithm for determining the connectivity of arbitrary points in symmetric semi-algebraic sets. Additionally, Sections \ref{sec:prelim2}, \ref{sec:graph}, and \ref{sec:algo} extend and generalize certain notations and foundational results. However, the constraint on the polynomial degree remains: the degree of the input symmetric polynomials must not exceed the number of variables in order to obtain algorithmic improvements.

\section{Algorithmic preliminaries}\label{sec:prelim1}

This section introduces key algorithmic techniques for analyzing connectivity in semi-algebraic sets. We first discuss the encoding of real algebraic points using univariate representations and Thom encodings. We then provide an overview of roadmap algorithms, summarizing essential results on computing roadmaps and deciding connectivity, which form the basis for the algorithms developed later in the paper.

\subsection{Encoding of real algebraic points}
Our algorithms manipulate points obtained through subroutines, with
the outputs having coordinates represented as real algebraic
numbers. We encode such a point using univariate polynomials and a
Thom encoding. 

Let $\mathcal{F} \subseteq \R[\X]$ and $x \in \R^n$. A mapping $\zeta:
\mathcal{F} \rightarrow \{-1, 0, 1\}$ is called a {\em sign condition}
on $\mathcal{F}$. The {\em sign condition realized by} $\mathcal{F}$
on $x$ is defined as $\sign(\mathcal{F}, x): \mathcal{F} \rightarrow
\{-1, 0, 1\}$, where $f \mapsto \sign(f(x))$. We say $\mathcal{F}$
{\em realizes} $\zeta \in \{-1, 0, 1\}^{\mathcal{F}}$ if and only if
$\sign(\mathcal{F}, x) = \zeta$.
\begin{definition}
    A {\em real univariate representation} representing $x \in \R^n$
consists of: 
\begin{itemize}
\item  a zero-dimensional parametrization $$\scrQ = (q(T), q_0(T),
  q_1(T), \dots, q_n(T)),$$ where $q, q_0, 
  \dots, q_n$ lie in $\Q[T]$ with gcd($q, q_0$) = 1, and  
\item a Thom encoding $\zeta$ representing an element $\vartheta \in
  \R$ such that  
\[ 
q(\vartheta) = 0 \quad \text{and} \quad x =
\left(\frac{q_1(\vartheta)}{q_0(\vartheta)}, \dots,
  \frac{q_n(\vartheta)}{q_0(\vartheta)}\right) \in \R^n. 
\] 
\end{itemize}
\end{definition}

Let $\der(q) := \{q, q^{(1)}, q^{(2)}, \dots, q^{(\deg(q))}\}$ denote
a list of polynomials, where $q^{(i)}$, for $i > 0$, is the formal
$i$-th order derivative of $q$. A mapping $\zeta$ is called a {\em
  Thom encoding} of $\vartheta$ if $\der(q)$ realizes $\zeta$ on $x$
and $\zeta(q) = 0$. Note that distinct roots of $q$ in $\R$ correspond
to distinct Thom encodings \cite[Proposition~2.28]{BPR06}. We refer
readers to \cite[Chapters 2 and 12]{BPR06} for details about Thom
encodings and univariate representations. 

\begin{example}
A real univariate representation representing $x = \left( \frac{1}{2}, \, 3/2 - \frac{\sqrt{2}}{4} \right)$ is $(\mathscr{Q}, \zeta)$, where
\[
\mathscr{Q} = (q, q_0, q_1, q_2) = (T^2 - 2, 2T, T, 3T - 1)
\]
and $\zeta = (1, 1)$ (equivalently, $\zeta = (+, +) = (q' > 0 \wedge q'' > 0)$).

Indeed, let $\vartheta = \sqrt{2}$ be a root of $q$. Then $x = \left( \frac{q_1(\vartheta)}{q_0(\vartheta)}, \, \frac{q_2(\vartheta)}{q_0(\vartheta)} \right)$. Moreover, $q'(T) = 2T$ and $q''(T) = 2$, implying that $q'(\vartheta) > 0$ and $q''(\vartheta) > 0$. Thus, $\zeta = (+, +)$ serves as a Thom encoding of $\vartheta$.
\end{example}

Given a non-zero polynomial $q \in \mathbb{R}[T]$, we consider the routine
\[
\text{{\sf ThomEncoding}}(q),
\]
which returns the ordered list of Thom encodings of the roots of $q$ in $\mathbb{R}$. This routine can be executed using \cite[Algorithm 10.14]{BPR06} with a complexity of $O(\delta_q^4 \log(\delta_q))$, where $\delta_q = \deg(q)$, involving arithmetic operations in $\mathbb{Q}$.

Let $q \in \mathbb{R}[T]$ be a polynomial of degree $\delta_q$ and $p \in \mathbb{R}[T]$ be a polynomial of degree $\delta_p$. Additionally, let $\thom(q)$ denote the list of Thom encodings of the set of roots of $q$ in $\mathbb{R}$. We define the routine 
\[
\text{{\sf Sign\_ThomEncoding}}(q, p, \thom(q)),
\]
which, for each $\zeta \in \text{{Thom}}(q)$ specifying the root $\vartheta$ of $q$, returns the sign of $p(\vartheta)$. This routine can be implemented using \cite[Algorithm 10.15]{BPR06} with a complexity of 
\[
O\left( \delta_q^2 \left( \delta_q \log(\delta_q) + \delta_p \right) \right)
\]
arithmetic operations in $\mathbb{Q}$.

\subsection{Roadmaps and connectivity}
In general, a roadmap for a semi-algebraic set is a curve that has a non-empty and connected intersection with all of its connected components. 

Formally, let $S \subset \mathbb{R}^n$ be a semi-algebraic set. We denote by $\pi$ the projection onto the $X_1$-axis, and for $x \in \mathbb{R}$, we define
\[
S_x = \{ y \in \mathbb{R}^{n-1} : (x,y) \in S \}.
\]
\begin{definition}
A roadmap for $S$ is a semi-algebraic set $\RM(S) \subset S$ of dimension at most one, which satisfies the following conditions: 
\begin{itemize}
  \item[{\sf RM}$_1$.] For every semi-algebraic connected component $C$ of $S$, the intersection $C \cap \RM(S)$ is a semi-algebraic connected set. 
  \item[{\sf RM}$_2$.] For every $x \in \mathbb{R}$ and for every semi-algebraic connected component $C'$ of $S_x$, we have $C' \cap \RM(S) \neq \emptyset$.
\end{itemize}
\end{definition}

Let $\mathcal{M} \subset \mathbb{R}^n$ be a finite set of points. A roadmap for $(S, \mathcal{M})$ is a semi-algebraic set $\RM(S, \mathcal{M})$ such that $\RM(S, \mathcal{M})$ is a roadmap of $S$, and $\mathcal{M} \subset \RM(S, \mathcal{M})$. Roadmaps can be used to decide the connectivity of semi-algebraic sets. We summarize below the results we use in this paper.

\begin{theorem}[{\cite[Theorem 3]{basu2000computing}} and {\cite[Theorem 16.14]{BPR06}}]
\label{them:normal}
  Let $F$ be a sequence of polynomials in $\mathbb{Q}[\X]$ with the algebraic set $Z \subset \mathbb{R}^n$ defined by $F$ of dimension $k$. Consider $G = (g_1, \dots, g_s) \subset \mathbb{K}[\X]$. Let $S$ be a semi-algebraic subset of $Z$ defined by $G$. Let $d$ be a bound for the degrees of $F$ and $G$. Consider a finite set of points $\mathcal{M} \subset Z$, with cardinality $\delta$, described by real univariate representations of degree at most $d^{O(n)}$. Then the following holds: 
  \begin{itemize}
    \item There is an algorithm whose output is exactly one point in every semi-algebraically connected component of $S$. The complexity of the algorithm is $s^{k+1}d^{O(n^2)}$ arithmetic operations in $\mathbb{Q}$.
    \item There exists an algorithm to compute a roadmap for $(Z, \mathcal{M})$ using $$\delta^{O(1)} d^{O(n^{1.5})}$$ arithmetic operations in $\mathbb{Q}$.
  \end{itemize}
  As a consequence, there exists an algorithm for deciding whether two given points, described by real univariate representations of degree at most $d^{O(n)}$, belong to the same connected component of $Z$, using $$s^{k+1}d^{O(n^2)}$$ arithmetic operations in $\mathbb{Q}$.
\end{theorem}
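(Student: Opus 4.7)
The three statements are compilations of established results in computational real algebraic geometry, and my plan is to derive each from a specific subroutine already present in~\cite{BPR06,basu2000computing}, verifying that the dimension parameter $k$ and the cardinality $\delta$ of $\mathcal{M}$ enter the complexity as claimed.

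For the first item, I would invoke the sampling algorithm based on the critical point method described in~\cite[Chapter 13]{BPR06}. After an infinitesimal deformation of $Z$ that renders it smooth and bounded, each realizable sign condition of $G$ on $Z$ defines a locally closed stratum, and a generic linear projection restricted to this stratum has finitely many critical points, at least one per semi-algebraically connected component. The count of realizable sign conditions of $G$ on a variety of dimension $k$ is bounded by $s^{k+1}$ via the classical Milnor--Thom--Oleinik--Petrovskii estimates; for each such sign condition, one zero-dimensional system of degree $d^{O(n)}$ must be solved, at a cost of $d^{O(n^2)}$ arithmetic operations. Summing over all sign conditions and discarding the witnesses that do not realize the inequalities cutting out $S$ yields the announced $s^{k+1}d^{O(n^2)}$ bound.

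For the second item, I would use the divide-and-conquer roadmap algorithm of~\cite{basu2000computing} initialized with the distinguished set $\mathcal{M}$. After placing $Z$ in Noether position through a generic linear change of coordinates, the algorithm selects pivot fibers of the projection $\pi$ onto the $X_1$-axis whose set of $X_1$-values contains $\pi(\mathcal{M})$, recursively computes $\RM$ of these fibers together with a $\RM$ of the polar variety of $\pi$ in $Z$, and glues them along their common intersections with the pivot fibers. The algebraic cost per recursive level is $d^{O(n^{1.5})}$, while the polynomial factor $\delta^{O(1)}$ records the combinatorial blow-up from adding the extra $\delta$ pivot fibers needed to incorporate $\mathcal{M}$ into the output.

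For the consequence, given two points $x,y \in Z$ presented by real univariate representations of degree $d^{O(n)}$, I would run the sampling algorithm of the first item together with auxiliary polynomials that isolate $x$ and $y$, so that their membership in a common connected component of $Z$ reduces to the equality of the corresponding sample indices; alternatively, one can compute a $\RM(Z,\{x,y\})$ and decide connectivity by traversing its one-dimensional semi-algebraic incidence graph. The main obstacle I anticipate is that plugging the univariate representations of $x$ and $y$ into the intermediate zero-dimensional systems may inflate their degrees, but the assumed bound $d^{O(n)}$ on these presentations is exactly what is needed for the propagation lemmas of~\cite[Chapter 12]{BPR06} to keep the total cost within $s^{k+1}d^{O(n^2)}$ arithmetic operations in $\mathbb{Q}$.
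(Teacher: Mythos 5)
This theorem is not proved in the paper at all: it is imported verbatim from \cite{basu2000computing} and \cite[Chapter 16]{BPR06}, so there is no internal proof to compare against, and your proposal is in effect a reconstruction of the proofs in the cited literature. At that level, your treatment of the first bullet and of the final consequence follows the standard route (critical-point/sampling method on a deformed variety, then deciding connectivity either by attaching $x$ and $y$ to sample points or by computing a roadmap adapted to $\{x,y\}$ and traversing its incidence graph), and it is consistent with the sources. One slip there: the number of realizable sign conditions of $s$ polynomials on a $k$-dimensional variety is not bounded by $s^{k+1}$; the Oleinik--Petrovskii--Milnor--Thom type bound has the form $\binom{s}{\le k}\,d^{O(n)}$, and in the actual algorithm the factor $s^{k+1}$ arises combinatorially from looping over subsets of at most $k+1$ of the $g_i$, not from counting sign conditions. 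Since you multiply by $d^{O(n^2)}$ anyway, this does not affect the stated bound $s^{k+1}d^{O(n^2)}$.

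The genuine gap is in the second bullet. You attribute a divide-and-conquer roadmap with algebraic cost $d^{O(n^{1.5})}$ per recursive level to \cite{basu2000computing}; but that algorithm (like the roadmap of \cite[Chapter 16]{BPR06}) recurses through one-dimensional critical/polar curves and pivot fibers and only yields $d^{O(n^2)}$, and merely initializing it with the $\delta$ extra fibers through $\mathcal{M}$ cannot lower the exponent to $n^{1.5}$. The subquadratic exponent requires the baby-step/giant-step idea of sweeping with polar varieties of dimension roughly $\sqrt{n}$, i.e. the algorithms of \cite{safey2011baby,basu2014baby} (the deterministic version for general real algebraic sets being \cite{basu2014baby}), into which the points of $\mathcal{M}$, given by univariate representations of degree $d^{O(n)}$, are incorporated at an extra cost polynomial in $\delta$. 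So the bound $\delta^{O(1)}d^{O(n^{1.5})}$ does not follow from the algorithm you describe; it follows from these later roadmap algorithms, which is also where the paper's stated complexity actually comes from.
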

In the following sections, we use the notation ${\sf Normal\_Point\_Connected}(S)$ and ${\sf Normal\_Connect}(S, x, y)$ to denote the algorithm mentioned in the Theorem above.
${\sf Normal\_Point\_Connected}$ computes exactly one point per connected component of a semi algebraic set $S$ described by polynomials in $n$ variables of degree at most $d$. This algorithm returns at most $O((sd)^n)$ many points and its complexity is bounded by $s^{n}d^{n^2}$ operations in \(\Q\). Given a semi-algebraic set $S$ and two real points $x$ and $y$, the algorithm ${\sf Normal\_Connect}(S, x, y)$ returns \textbf{true} if $x$ and $y$ belong to the same connected component of $S$; otherwise, it returns \textbf{false}. The complexity of this algorithm is also  bounded by $s^{n}d^{n^2}$ operations in \(\Q\).  

Note also that there are some other prior results concerning the problem of computing roadmaps computing one point per each connected component, and deciding the connectivity of two given points (see Subsection~\ref{subsec:result}), with better complexities.  For example, when the input polynomials define an algebraic set, the complexities of these two procedure are $(nd)^{n \log n}$ by using a deteministic algorithm given in \cite{basu2014divide}.  The procedure returns $(nd) O(n \log n)$ many points. However, the result stated in the theorem  above is sufficient for our main algorithms in Section~\ref{sec:algo} and Section 5 and in the context of any semi-algebraic set.

\section{Preliminaries on symmetric polynomials}\label{sec:prelim2}
Symmetric polynomials play a fundamental role in our study of connectivity in semi-algebraic sets. In this section, we provide essential definitions and results on symmetric polynomials, their invariance properties under the symmetric group, and their representations using power sums. We also introduce Weyl chambers and their associated combinatorial structures, which allow us to analyze connectivity in symmetric semi-algebraic sets. Finally, we discuss Vandermonde maps, which provide a useful tool for encoding and understanding these spaces.

\subsection{Symmetric polynomials}
Throughout the article, we fix integers $n, d \in \mathbb{N}$ with $d \leq n$ and denote by  
\(
\mathbb{R}[\mathbf{X}] := \mathbb{R}[X_1,\dots,X_n]
\)
the real polynomial ring in $n$ variables. The \emph{basic closed semi-algebraic set} given by polynomials $f_1, \dots, f_s$ in $\mathbb{R}[\mathbf{X}]$ is  
\[
\{x \in \mathbb{R}^n \mid f_i(x) \geq 0 \text{ for all } i = 1, \dots, s\}.
\]
A semi-algebraic set is a set generated by a finite sequence of union, intersection
and complement operations on basic semi-algebraic sets.

For $n \in \mathbb{N}_0$, we denote by $S_n$ the symmetric group on $n$ elements, which has order $n!$. The symmetric group $S_n$ acts naturally on $\mathbb{R}^n$ by permuting coordinates, and this action extends to $\mathbb{R}[\mathbf{X}]$ by permuting the variables. A semi algebraic set $S\subset\R^n$ is called \emph{symmetric} if for every $x\in S$ and $\sigma\in S_n$ we have $\sigma(x)\in S$. We say that a polynomial $f \in \mathbb{R}[\mathbf{X}]$ is symmetric if  
\[
f(\mathbf{X}) = f(\sigma^{-1}(\mathbf{X})) \quad \text{for all } \sigma \in S_n,
\]
and we denote the $\mathbb{R}$-algebra of symmetric polynomials by $\mathbb{R}[\mathbf{X}]^{S_n}$. 

\begin{remark}
Note that clearly every semi-algebraic set which is exclusively defined by symmetric polynomials, is symmetric. While a symmetric semi-algebraic set can sometimes be described using polynomials that are not themselves symmetric, results by Bröcker \cite{Br} show that every symmetric semi-algebraic set can always be described by symmetric polynomials. Throughout this article, we assume that any symmetric semi-algebraic set is given by a description involving only symmetric polynomials.
\end{remark}
\begin{definition}
For $1 \leq i \leq n$, we define the \emph{$i$-th power sum} as  
\[
p_i := X_1^i + X_2^i + \cdots + X_n^i.
\]
\end{definition}

The fundamental theorem of symmetric polynomials states that every symmetric polynomial  can be uniquely written in terms of the first $n$ power sums, i.e., for every symmetric polynomial $f$, there is a unique polynomial $g \in \mathbb{R}[Z_1,\ldots,Z_n]$, where \(Z_1, \dots, Z_n\) are new variables, such that
\[
f = g(p_1,\ldots,p_n).
\]
Furthermore, due to the uniqueness of this representation, it follows that for a symmetric polynomial of degree $d < n$, this representation cannot contain power sums of degree higher than $d$, i.e., we have: 

\begin{proposition}
Any symmetric polynomial $f \in \mathbb{R}[\mathbf{X}]^{S_n}$ of degree $d \leq n$ can be uniquely written as
\[
f = g(p_1,\dots,p_d),
\]
where $g$ is a polynomial in $\mathbb{R}[Z_1,\dots,Z_d]$.
\label{lm:deg_restrict}
\end{proposition}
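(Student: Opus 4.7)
The plan is to bootstrap directly off the fundamental theorem of symmetric polynomials, which (as quoted in the excerpt just before the proposition) already hands us a unique $g\in\mathbb{R}[Z_1,\ldots,Z_n]$ with $f=g(p_1,\ldots,p_n)$. Thus the only thing left to show is that when $\deg f\leq d$, no variable $Z_i$ with $i>d$ can appear in $g$. My approach is to upgrade the isomorphism from the fundamental theorem to a graded isomorphism with respect to a suitable weighting, and then read off the conclusion from a degree count.

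First I would equip $\mathbb{R}[Z_1,\ldots,Z_n]$ with the \emph{weighted grading} in which $Z_i$ has weight $i$, so that a monomial $Z^{\alpha}=Z_1^{a_1}\cdots Z_n^{a_n}$ has weight $w(\alpha)=\sum_{i=1}^{n} i\,a_i$. Since $p_i=X_1^i+\cdots+X_n^i$ is homogeneous of (ordinary) degree $i$ in $\mathbf{X}$, the product $p_1^{a_1}\cdots p_n^{a_n}$ is homogeneous of degree exactly $w(\alpha)$. Consequently the map $\Phi:Z_i\mapsto p_i$ sends the weighted graded piece $\mathbb{R}[Z]_{w}$ into the ordinary graded piece $\mathbb{R}[\mathbf{X}]^{S_n}_{w}$ of symmetric polynomials homogeneous of degree $w$.

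Next I would argue that this graded map is actually an isomorphism on each piece. Injectivity is immediate from the algebraic independence of $p_1,\ldots,p_n$, which is the content of the uniqueness part of the fundamental theorem; surjectivity onto each piece follows because any symmetric polynomial $h$ of degree exactly $w$ decomposes as $g(p_1,\ldots,p_n)$, and by the uniqueness of the fundamental theorem the weight-$w$ part $g_{=w}$ of $g$ is sent to the degree-$w$ part of $h$, namely $h$ itself. (This is where one must be careful that no cancellation among distinct $p^\alpha$'s of equal weighted degree can artificially shrink the $\mathbf{X}$-degree; precisely this is ruled out by the linear independence of the $p^\alpha$.)

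With the graded isomorphism $\Phi:\bigoplus_{w}\mathbb{R}[Z]_w \xrightarrow{\sim}\bigoplus_w \mathbb{R}[\mathbf{X}]^{S_n}_w$ in hand, the conclusion is a one-line degree count: if $f$ has degree $\leq d$, then writing $g=\sum_{\alpha} c_\alpha Z^\alpha$, only monomials with $w(\alpha)\leq d$ occur with $c_\alpha\neq 0$. But $w(\alpha)=\sum_i i\,a_i\leq d$ forces $a_i=0$ for every $i>d$, since otherwise the sum would be at least $d+1$. Hence $g\in\mathbb{R}[Z_1,\ldots,Z_d]$, and uniqueness of this truncated representation is inherited from the uniqueness in the fundamental theorem. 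The main (and really only) technical point is the graded-isomorphism step; I expect it to be routine once the weighting is set up correctly, but it is where the algebraic independence of $p_1,\ldots,p_n$ is essential.
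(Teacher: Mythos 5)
Your argument is correct and is essentially the argument the paper intends: the paper simply asserts the proposition as a consequence of the uniqueness in the fundamental theorem, and your weighted-grading computation (giving $Z_i$ weight $i$ so that $\Phi:Z_i\mapsto p_i$ preserves degree, then using algebraic independence to kill the weight-$>d$ components) is exactly the standard degree count that makes that one-line assertion precise. In fact only the injectivity/grading-preservation part of your graded-isomorphism step is needed; the surjectivity onto each graded piece is true but plays no role in the conclusion.
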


Note that there is no intrinsic preference for using the power sum polynomials $p_i$. Indeed, any family of algebraically independent symmetric polynomials, for example, the elementary symmetric polynomials and the complete homogeneous symmetric polynomials, with degree sequence $\{1,\ldots, n\}$, yields the same result.

\subsection{Weyl chambers and restriction to faces}

We denote by $\mathcal{W}_c$ the cone defined by $X_1 \leq X_2 \leq \cdots \leq X_n$, known as the \emph{canonical Weyl chamber} of the action of $S_n$. It serves as a fundamental domain for this action. The walls of $\mathcal{W}_c$ contain points where the inequalities are nonstrict at some positions, i.e., points $x = (x_1, \ldots, x_n) \in \mathcal{W}_c$ for which $x_i = x_{i+1}$ for some $i \in \{1, \ldots, n\}$. More precisely:

\begin{definition}
\label{def:composition-order}
Let  $n\in \N$ be a fixed integer.
\begin{enumerate}
    \item  A sequence of positive integers $\lambda = (\lambda_1, \ldots, \lambda_\ell)$ with $|\lambda| := \sum_{i=1}^{\ell} \lambda_i = n$ is called a \emph{composition of $n$ into $\ell$ parts}. Here, $\ell$ is called the \emph{length} \(\ell(\lambda)\) of $\lambda$, and we denote by $\Comp(n)$ the set of compositions of $n$ and by $\Comp(n,\ell)$ the set of compositions of $n$ of length $\ell$.

\item For $n \in \mathbb{N}$ and $\lambda = (\lambda_1, \ldots, \lambda_\ell) \in \Comp(n)$, we denote by $\mathcal{W}_c^{\lambda}$ the subset of $\mathcal{W}_c$ defined by
\[
X_1 = \cdots = X_{\lambda_1} \leq X_{\lambda_1+1} = \cdots = X_{\lambda_1+\lambda_2} 
\leq \cdots \leq X_{\lambda_1+\cdots+\lambda_{\ell-1}+1} = \cdots = X_n.
\]

\item For every $\lambda \in \Comp(n, d)$, the set $\mathcal{W}_c^\lambda$ defines a $d$-dimensional face of the cone $\mathcal{W}_c$, and every face is obtained in this way from a composition and denote by $L_\lambda$ the linear span of $\mathcal{W}_c^\lambda$. Note that
\[
\dim L_\lambda = \dim \mathcal{W}_c^\lambda = \ell(\lambda).
\] 
\item In the case when $\ell=n-1$,there is a unique $i\in\{1,\ldots,n-1\}$ such that $\lambda_i>1$. With a slight abuse of notation, we denote the corresponding facet of the canonical  Weyl chamber,  $\mathcal{W}_c^\lambda$,  also as $\mathcal{W}_c^{(i,i+1)}$. 
\item Finally, for every $d$, we define the set of \emph{alternate odd} compositions $\CompMax(n,d)$ as
\[
\left\{\lambda = (\lambda_1, \ldots, \lambda_d) \in \Comp(n) \; \mid \;
\lambda_{2i+1} = 1 \text{ for all } 0 \leq i < d/2 \right\}.
\]
\end{enumerate}
\end{definition}
Let $n \in \mathbb{N}$, and let $\lambda, \mu \in \Comp(n)$. We write $\lambda \prec \mu$ if $\mathcal{W}_c^{\lambda} \supset \mathcal{W}_c^{\mu}$. Equivalently, $\lambda \prec \mu$ if $\mu$ can be obtained from $\lambda$ by replacing some of the commas in $\lambda$ with $+$ signs. It follows that $\prec$ defines a \emph{partial order} on $\Comp(n)$, making $\Comp(n)$ a \emph{poset}.  
For $\lambda, \mu \in \Comp(n)$, the \emph{smallest composition greater than both} $\lambda$ and $\mu$ (with respect to $\prec$) is called their \emph{join}. The face corresponding to the join of $\lambda$ and $\mu$ is given by  
\(
\mathcal{W}_c^\lambda \cap \mathcal{W}_c^\mu.
\)

For $x \in \W_c$, we denote by $\comp(x)$ the largest composition $\lambda$ of $n$, such that $x$ can be written as 
    \[ x = (\underbrace{z_1, \dots, z_1}_{\lambda_1\text{-times}},
    \underbrace{z_2, \dots, z_2}_{\lambda_2\text{-times}}, \dots,
    \underbrace{z_\ell, \dots, z_\ell}_{\lambda_\ell\text{-times}}), \] 
    where $z = (z_1, \dots, z_\ell) \in \R^\ell$, and $\ell$ is the length of
    $\lambda$. For a composition $\lambda = (\lambda_1, \dots, \lambda_\ell)
      \in \Comp(n)$ and $f \in \R[\X]$, we define the polynomial
   \[ f^{[\lambda]} := f(\underbrace{X_1, \dots,
      X_1}_{\lambda_1\text{-times}}, \underbrace{X_2, \dots,
      X_2}_{\lambda_2\text{-times}}, \dots, \underbrace{X_\ell, \dots,
      X_\ell}_{\lambda_\ell\text{-times}}). \]

Notice that $f^{[\lambda]}$ is a polynomial in $\ell$ variables and its image is exactly the image of $f$ when restricted to $\W_c^\lambda$. More generally, given a symmetric semi algebraic set $S\subset\R^n$ we denote by $S_c=S\cap Ww_c$ and given a partition $\lambda \in \Comp(n)$ we denote $S_c^{[\lambda]}$ the intersection of $S_c$ with $\Ww_c^{\lambda}$. If $S$ is described by symmetric polynomials, then $S_c^{\lambda}$ is just obtained by replacing every polynomial $f$ in the description by $f^{[\lambda]}$.

\begin{example}
Consider $n = 4$.   There exist eight compositions of $4$ listed as
\[\left\{(4), (3,1),(1, 3), (2,2), (2,1, 1), (1,2,1), (1,1,2),
(1,1,1,1)\right\}.\] 
Notably, for  $\mu$ in the set  $\{(3,1),
(1,3), (4)\}$, we have  $(1,2,1) \prec \mu$. Furthermore, the join of
$(2,1,1)$ and $(1,1,1,1)$ is $(2,1,1)$. 

Given $x = (-1, 5 , 5, -3)$, we determine  $\comp(x) =
(1,2,1)$. Lastly, considering the polynomial $f = x_3^3 + x_1x_2 -
x_4$ and the composition $\lambda = (1,2,1)$, it follows that
$f^{[\lambda]} = x_1x_2 + x_2^3 - x_3$.   
\end{example} 

Given a general $x\in\R^n$ we know that there is a unique element $x'\in \mathcal{W}_c$ in the  $S_n$-orbit of $x$. This $x'$ is obtained by ordering the coordinates of $x$ increasingly. This can be done algorithmically with the classical  sorting algorithm bubble sort. This algorithm also can be used to obtain the unique permutation $\sigma\in S_n$ transforming $x$ into  $x'$. Since $\sigma\in S_n$ we know that we can write $\sigma$ as a product of pairwaise transposition $s_i=(i,i+1)$ for $i\in\{1,\ldots,n\}$ and we will use the following variant of Bubble sort to obtain the minimal set of transpositions necessary to describe $\sigma.$
\begin{algorithm}\label{algo:bubble}
\caption{{\sf Minimal\_Adjacent\_Transpositions}}
\begin{itemize}
\item[{\bf Input:}] Vector \( x = (x_1, \dots, x_n) \).
\item[{\bf Output:}] $T$ a minimal sequence of adjacent transpositions sorting \( x \) and the sorted vector \( x^{'} \).
\end{itemize}
\vspace{-0.1cm}
\noindent\rule{16.5cm}{0.5pt}
\begin{enumerate}
    \item Initialize an empty list \( T \) of transpositions.
    \item {\bf for} \( i = 1 \) to \( n-1 \) {\bf do}
    \begin{enumerate}
        \item {\bf for} \( j = 1 \) to \( n - i \) {\bf do}
        \begin{enumerate}
            \item {\bf if} \( x_j > x_{j+1} \) {\bf then}
            \begin{enumerate}
                \item Swap \( x_j \) and \( x_{j+1} \).
                \item Append transposition \( (j,j+1) \) to \( T \).
            \end{enumerate}
        \end{enumerate}
    \end{enumerate}
    \item Return \( (T, x) \).
\end{enumerate}
\end{algorithm}
\begin{theorem}
The above Algorithm  correctly computes the minimal decomposition of $\sigma$ into adjacent transpositions. Its complexity is $O(n^2)$.
\end{theorem}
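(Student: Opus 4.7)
The plan is to establish three things in sequence: correctness of the sort, minimality of the returned decomposition, and the quadratic complexity bound.

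First, I would verify that after termination the returned vector $x$ is sorted in non-decreasing order. This is the standard correctness statement for bubble sort and is proved by induction on the outer loop index $i$: after the $i$-th iteration, the largest $i$ entries occupy positions $n-i+1,\ldots,n$ in correct order. The inductive step uses that the inner pass scans positions $1,\ldots,n-i$ and swaps any adjacent out-of-order pair, so the maximum among the first $n-i+1$ entries is forced into position $n-i+1$.

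Second, and most importantly, I would show that the returned list $T$ has minimum length among all sequences of adjacent transpositions whose product equals $\sigma$, where $\sigma$ is the unique permutation with $x' = \sigma(x)$. Define the number of inversions of a vector $x$ as
\[
\mathrm{inv}(x) \;:=\; \#\{(i,j) \mid 1 \leq i < j \leq n,\; x_i > x_j\}.
\]
I would establish two facts. First, each swap executed by the algorithm decreases $\mathrm{inv}$ by exactly one, because a swap occurs only when $x_j > x_{j+1}$, and interchanging adjacent entries $x_j$ and $x_{j+1}$ affects the relative order of this single pair while leaving all other pairs unchanged. Second, applying any adjacent transposition $(k,k+1)$ to any vector changes $\mathrm{inv}$ by exactly $\pm 1$, so any sequence of adjacent transpositions that sorts $x$ (reaching $\mathrm{inv}=0$) must have length at least $\mathrm{inv}(x)$. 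Combining these, since the algorithm terminates with a sorted vector, the first fact gives $|T| = \mathrm{inv}(x)$, while the second gives the matching lower bound. Hence $|T|$ equals the Coxeter length of $\sigma$ and $T$ is a minimal-length decomposition.

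Finally, the complexity bound follows from the loop structure: the inner loop performs $n-i$ iterations, each doing one comparison and at most one swap and one append, all of which cost $O(1)$. Summing $\sum_{i=1}^{n-1}(n-i) = \binom{n}{2}$ yields the $O(n^2)$ bound. The main obstacle is the minimality argument; once the inversion-counting invariant is identified, the rest is essentially bookkeeping. A small additional subtlety worth noting is the uniqueness of $\sigma$: when the input has repeated coordinates, multiple permutations sort $x$, but any two such permutations differ by a product of transpositions of equal entries, so the length of a minimal adjacent decomposition is still well-defined and equals $\mathrm{inv}(x)$.
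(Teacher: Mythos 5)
Your proof is correct and follows essentially the same route as the paper: the paper's argument is precisely the inversion-counting one (each executed swap removes exactly one inversion, and no adjacent-transposition sequence can sort with fewer than $\mathrm{inv}(x)$ steps), with the $O(n^2)$ bound coming from the bubble-sort loop structure. Your write-up merely supplies the details the paper leaves implicit, including the sensible remark about repeated coordinates.
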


\begin{proof}
Correctness follows from the observation that each adjacent transposition reduces the permutation's inversion number by exactly one. Since sorting requires eliminating all inversions, this process necessarily yields a minimal decomposition into adjacent transpositions. The complexity result is immediate, as the procedure is equivalent to bubble sort, which is well-known to have a worst-case complexity of $O(n^2)$.
\end{proof}
\begin{remark}
Note that there are other sorting algorithms with a better complexity. However,  while Bubble sort is not the most efficient sorting algorithm in general, the task of our Algorithm \ref{algo:bubble} is to obtain the  minimal sequence of adjacent transpositions that give the sorted vector. As more efficient sorting algorithms use swaps of non-adjacent coordinates, the inherent complexity of the requirement—specifically, determining the minimal sequence of adjacent transpositions—necessitates an $O(n^2)$ complexity.
\end{remark}

\subsection{Vandermonde maps}

\begin{definition}
For $m=(m_1,\ldots,m_n)\in \mathbb{N}^n$, we define the weighted power sum $p_{j}^{(m)}:=\sum_{i=1}^n m_iX_i^j$. 
\end{definition}
\begin{remark} Clearly, the usual power sum polynomials correspond to trivial weights $m=(1,\ldots,1)$. Furthermore, given a composition $\lambda$ of $n$ we directly see that the restriction of the $i$-th power sum polynomial $p_i$ to the Wall $W_c^\lambda$,  $p_j^{[\lambda]}$, as defined above coincided with $p_j^{(\lambda)}$.
\end{remark}
\begin{definition}
For a fixed $d \in \{1, \ldots, n\}$ and $m\in\N^n$, the $d$ (weighted)-\emph{Vandermonde map}:
\[
\abb{\nu_{n,d,m}}{\mathbb{R}^n}{\mathbb{R}^n}{x}{(p_1^{(m)}(x), \dots, p_d^{(m)}(x))}.
\]
When $m = (1, \ldots, 1)$, we simplify denote the map as $\nu_{n,d}$. When $d = n$, and $m=(1,\ldots,n)$ we simply refer to it as the Vandermonde map. 

For $a = (a_1, \dots, a_d) \in \mathbb{R}^d$, the fiber of the $d$-Vandermonde map, i.e., the set
\[
V(a) := \{ x \in \mathbb{R}^n \mid p_1^{(m)}(x) = a_1, \dots, p_d^{(m)}(x) = a_d \},
\]
is called the $m$ - weighted \emph{Vandermonde variety} with respect to $a$. Note that when $n = d$, and $m=(1,\ldots,1)$ every non-empty Vandermonde variety is precisely the orbit of a single point $x \in \mathbb{R}^n$.
\end{definition}
Arnold, Givental, and Kostov emphasized the significance of the Vandermonde map in their research on hyperbolic polynomials, and this topic has been explored further by \cite{Blekherman}. The required properties for our studies are summarized as follows.

\begin{theorem}[\cite{Blekherman, arnold1986hyperbolic,givental1987moments,kostov1989geometric}]\label{thm:arnold}
The Vandermonde map ($d=n,m=(1,\ldots,1)$) gives rise to a homeomorphism between a Weyl chamber  $\mathcal{W}$ and  its image. Moreover, for any composition $\lambda$, the weighted Vandermonde map $\nu_{n,\ell(\lambda),\lambda}$ provides a homeomorphism between $\mathcal{W}_c^{\lambda}$ and its image. Furthermore, for every $a\in \R^d$ and  $m\in\mathbb{Z}^d$ the $m$- weighted  Vandermonde variety $V_m(a)$  is either contractible or empty when intersected to the Weyl chamber $\mathcal{W}_c$. 
\end{theorem}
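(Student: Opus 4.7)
The plan is to treat the three assertions separately. The first two are straightforward consequences of Newton's identities and the Vandermonde-style Jacobian computation, while the contractibility claim is the essential content and requires the hyperbolic-polynomial machinery cited in the references.

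For the first assertion, given $a=\nu_{n,n}(x)$ one recovers $e_1(x),\dots,e_n(x)$ via Newton's identities, hence the monic polynomial $\prod_i(T-x_i)$ and the unordered multiset $\{x_i\}$. Restricting to $\mathcal{W}$ selects a unique ordering, so $\nu_{n,n}|_{\mathcal{W}}$ is a continuous bijection; its inverse is continuous because the real roots of a real-rooted monic polynomial depend continuously on its coefficients once their order is fixed. For the second assertion, parametrize $\mathcal{W}_c^{\lambda}$ by $z=(z_1,\dots,z_\ell)$ with $\ell=\ell(\lambda)$ and $z_1\le\cdots\le z_\ell$, so that $p_j(x)=\sum_i \lambda_i z_i^j=p_j^{(\lambda)}(z)$. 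The Jacobian
\[
\det\!\left[j\,\lambda_i z_i^{j-1}\right]_{i,j=1}^{\ell} \;=\; \ell!\,\Bigl(\prod_i \lambda_i\Bigr)\!\prod_{1\le i<j\le \ell}(z_j-z_i)
\]
is nonzero on the relative interior of $\mathcal{W}_c^{\lambda}$, giving local injectivity; global injectivity follows from the fact that an atomic measure $\sum_i \lambda_i \delta_{z_i}$ with known positive weights and $\ell$ distinct atoms is determined by its first $\ell$ moments, and a standard properness check then promotes the local diffeomorphism to a homeomorphism.

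The third assertion is the topologically substantive one. My plan is to use the homeomorphism from the first assertion to identify $\mathcal{W}_c$ with the set of degree-$n$ monic hyperbolic polynomials in coefficient space, so that $V_m(a)\cap\mathcal{W}_c$ becomes the slice of this hyperbolic locus obtained by prescribing the first $d$ weighted power sums of the roots. Via a Newton-type recursion (with the change of coordinates dictated by $m$), this is equivalent to prescribing the first $d$ elementary symmetric functions. It remains to prove that the set of hyperbolic polynomials with prescribed top $d$ elementary symmetric functions is either empty or contractible; I would proceed by induction on $n-d$, constructing at each step a deformation retraction that moves the next free elementary symmetric function toward a canonical extremum while preserving real-rootedness, using the interlacing theorem (the derivative of a hyperbolic polynomial is hyperbolic) and Rolle's theorem to control root collisions. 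The main obstacle is ensuring the retraction stays inside the hyperbolic locus throughout, i.e.\ preventing a pair of real roots from merging and peeling off as a complex conjugate pair, which amounts to a delicate analysis of the real discriminant hypersurface. This is exactly the content of the cited results of Arnold, Givental, Kostov, and Blekherman, which supply either the explicit retraction or an equivalent convexity argument.
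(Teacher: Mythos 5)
The paper does not prove this theorem at all: it is imported verbatim from the cited works of Arnold, Givental, Kostov and Blekherman et al., and is used later only as a black box (via Theorems \ref{thm:arnold} and \ref{thm:con}). So the relevant comparison is whether your sketch actually constitutes an independent proof, and it does not. Your treatment of the first assertion is fine (Newton's identities plus continuity of ordered real roots of a hyperbolic monic polynomial in its coefficients). For the second assertion, however, the step you label a ``standard fact'' --- that an atomic measure with known positive weights and $\ell$ distinct atoms is determined by its first $\ell$ moments --- is not standard and, as stated, is not even quite correct without invoking the ordering of the atoms and the assignment of the weights $\lambda_i$ to them (with repeated weights the measure does not distinguish permuted atom labels; determinacy of a finitely atomic measure with unknown atoms classically needs about $2\ell$ moments, not $\ell$). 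With the chamber constraint $z_1\le\cdots\le z_\ell$ this determinacy claim is essentially the injectivity of $\nu_{n,\ell(\lambda),\lambda}$ on $\mathcal{W}_c^{\lambda}$, i.e.\ Kostov's theorem itself, so your argument is circular at exactly the point where the content lies; the Jacobian computation only gives local injectivity on the relative interior and says nothing on the walls, where the determinant vanishes.

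For the third assertion you correctly identify contractibility of $V_m(a)\cap\mathcal{W}_c$ as the substantive claim, but what you offer is a plan, not a proof: the proposed induction on $n-d$ with a deformation retraction moving the free elementary symmetric functions while ``preventing a pair of real roots from merging and peeling off as a complex conjugate pair'' is precisely the delicate analysis carried out by Arnold, Givental and Kostov (and revisited by Blekherman et al.), and you explicitly defer it to those references. That is a legitimate thing to do --- it is exactly what the paper does --- but it means your proposal should be read as a reduction to the cited literature with a partially flawed elementary preamble, not as a proof of the statement. If you want the first two assertions on record as self-contained, you need either a genuine global injectivity argument for the weighted map on the closed chamber (e.g.\ the monotonicity/degree-theoretic argument in Kostov's work, or deducing it from the $d=n$ case by viewing $\mathcal{W}_c^{\lambda}$ as a face and using that $p_j^{[\lambda]}=p_j^{(\lambda)}$), together with a properness argument valid up to the walls.
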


\section{Graphs and connectivity of unions}
\label{sec:graph}
A central part of our algorithm will require that we algorithmically understand the connectivity of unions of semi-algebraic sets. Our main tool for studying connectivity in a union is based on relating it to graphs, for which efficient algorithms for connectivity are available. First, recall the following definitions.  

\begin{definition}
Let $G=(V,E)$ be a graph. Given two vertices $u,v\in V$ we say that they are \emph{connected} if there exists a sequence of vertices \(v_1, v_2, \ldots, v_k\) where \(v_1 = u\), \(v_k = v\), and \((v_i, v_{i+1}) \in E\) for all \(1 \leq i < k\). Moreover, a \emph{connected component} of  \(G\) is a maximal set of vertices \(C \subseteq V(G)\) such that for every pair of vertices \(v, u\) in \(C\), there is a path in \(G\) between \(v\) and \(u\). 
\end{definition}

Our primary motivation for this concept arises from the following scenario. Consider a collection of semi-algebraic sets  $S_1,\ldots,S_k\subset\R^n$. Our goal is to analyze the connectivity properties of their union $\bigcup_{i=1}^k S_i$. To facilitate this analysis, we introduce a bipartite graph constructed as follows.

\begin{algorithm}
\caption{Construct Bipartite Graph from Semi-Algebraic Sets}\label{alog:G}
\begin{itemize}
    \item[{\bf Input:}] a collection of semi-algebraic sets $S_1, \ldots, S_k$
    \item[{\bf Output:}] a bipartite graph $G(A \cup B, E)$ where $A$ and $B$ are vertex sets representing connected components of the sets, and $E$ is a set of edges indicating connectivity between these components
\end{itemize}
\vspace{-0.1cm}
 \noindent\rule{16.5cm}{0.5pt}
 
\begin{enumerate}
    \item initialize $A, B, E$ as empty sets
    \item {\bf for} {each pair of indices $i, j$ with $1 \leq i, j \leq k$ and $i \neq j$} {\bf do}
    \begin{enumerate}
        \item     
        compute one point per connected component of  $S_{(i,j)} = S_i \setminus S_j$ and $S_{(j,i)} = S_j \setminus S_i$. Add the points together with the ordered pairs ${(i,j)}$ and ${(j,i})$ into $A$
        \item compute one point per connected component of $S^{i,j} = S_i \cap S_j$ and add each point to $B$ together with the tuple $(i,j)$
    \end{enumerate}
    \item {\bf for} each $u \in A$ {\bf do}
    \begin{enumerate}
        \item {\bf for} each $w \in B$ {\bf do}
        \begin{enumerate}
            \item  determine the semi-algebraic set $S_i$ to which both $u$ and $w$ belong by inspecting the associated ordered pair {$(i,j)$} and the associated tuple $\{u,w\}$
            \item {\bf if} {$u,w$ are in the same set $S_i$} {\bf then}
            \begin{enumerate}
                \item {\bf if} {\sc connect}({$u, w, S_i$}) {\bf then} add edge $(u, w)$ to $E$
            \end{enumerate}
        \end{enumerate}
    \end{enumerate}
    \item \textbf{return} $G(A \cup B, E)$
\end{enumerate}
\label{alg:ConstructGraph}
\end{algorithm}

This information on the pairs of connected components and their intersection contains enough information to deduce the connectivity of the union. Indeed, this can be seen directly via the usual Mayer-Vietoris spectral sequence, whose $(0.0)$-term of the spectral sequence stabilizes at the second page. From this sequence it follows that the $0$-cohomology of the union $\bigcup_i S_i$ is isomorphic to the kernel of
\[
\bigoplus_i \mathbb{H}^0(S_i) \rightarrow \bigoplus_{i,j} \mathbb{H}^0(S_{ij}),
\]
where the map is the usual generalized restriction map. This observation together with the notion of connectivity in graphs directly yields the following result.

\begin{theorem}\label{thm:graph}

Let \( S_1,\dots,S_k \subset \mathbb{R}^n \) be finite semi-algebraic sets, each of dimension \( n \), defined by polynomials of degree at most \( d \). The connected components of the union \( S = \bigcup_{i=1}^k S_i \) are in one-to-one correspondence with the connected components of the bipartite graph \( G \) constructed by Algorithm~\ref{alg:ConstructGraph}.
\end{theorem}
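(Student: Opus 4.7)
My strategy is to exhibit a bijection between the connected components of $S = \bigcup_{i=1}^k S_i$ and those of the bipartite graph $G$. The first step is to define a map $\Phi$ sending each vertex of $G$ (which, by construction in Algorithm~\ref{alg:ConstructGraph}, is a point of some $S_i \subseteq S$) to the connected component of $S$ that contains it. I would verify that $\Phi$ is constant on every edge of $G$: the edge-adding step of the algorithm creates an edge only when its two endpoints lie in a common connected component of some $S_i$, which is a fortiori a common component of $S$. Hence $\Phi$ descends to a well-defined map $\overline{\Phi}$ between the sets of connected components, and the task reduces to proving $\overline{\Phi}$ is a bijection.

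Surjectivity of $\overline{\Phi}$ is direct: given a connected component $C$ of $S$, pick any $p \in C$; then $p \in S_i$ for some $i$, and either $p \in S_i \cap S_j$ for some $j \neq i$, which provides a $B$-vertex in $C$, or $p \in S_i \setminus S_j$ for every $j \neq i$, which provides an $A$-vertex in $C$.

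Injectivity is the main step. Given vertices $v_1, v_2$ of $G$ lying in the same component of $S$, I would join them by a semi-algebraic path $\gamma \colon [0,1] \to S$ and, using compactness together with a semi-algebraic cell decomposition of the image, partition the parameter interval as $0 = t_0 < t_1 < \cdots < t_N = 1$ with $\gamma([t_{k-1}, t_k]) \subseteq S_{i_k}$ for some index $i_k$. At each interior breakpoint $p_k := \gamma(t_k) \in S_{i_k} \cap S_{i_{k+1}}$, I let $w_k \in B$ be the representative of its component, producing a sequence $v_1, w_1, w_2, \ldots, w_{N-1}, v_2$ in which consecutive terms lie in a common connected component of some $S_{i_k}$.

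The main obstacle is that $G$ is bipartite, so joining two successive $B$-vertices in $G$ requires an intermediate $A$-vertex lying in the same component of $S_{i_{k+1}}$. In the generic case this component meets $S_{i_{k+1}} \setminus S_j$ for some $j$, which produces an $A$-vertex adjacent in $G$ to both $w_k$ and $w_{k+1}$ by the edge rule. The remaining, degenerate case—where the common component of $S_{i_{k+1}}$ is entirely contained in every other $S_j$—is handled by observing that the component then coincides with a full component of the pairwise intersection $S_{i_{k+1}} \cap S_j$ for every $j$, so $w_k$ and $w_{k+1}$ represent the same geometric component and are identified (or connected through a shared vertex) in $G$. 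This case distinction is precisely what the Mayer-Vietoris computation of $H^0(S)$ sketched before the theorem formalizes; chaining the local connections and attaching the endpoint arguments for $v_1, v_2$ yields the desired path in $G$ and completes the proof.
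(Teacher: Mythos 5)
Your overall strategy coincides with the paper's: both arguments set up an explicit correspondence between connected components of $S$ and connected components of $G$ and verify injectivity and surjectivity, and your treatment of well-definedness, surjectivity, and the ``generic'' part of injectivity (an $A$-vertex of the pair $(i_{k+1},j)$ lying in the common component $C'$ of $S_{i_{k+1}}$ is adjacent, by the edge rule, to both $w_k$ and $w_{k+1}$) is in fact more detailed than the proof in the paper.

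The genuine gap is in your degenerate case. Suppose the component $C'$ of $S_{i_{k+1}}$ containing $p_k$ and $p_{k+1}$ satisfies $C'\subseteq S_j$ for every $j\neq i_{k+1}$. You are right that $C'$ is then a full connected component of $S_{i_{k+1}}\cap S_j$ for each such $j$, but it does not follow that $w_k$ and $w_{k+1}$ ``are identified'' in $G$: they are produced by separate calls for the pairs $(i_k,i_{k+1})$ and $(i_{k+1},i_{k+2})$, carry different tuples, and are therefore distinct vertices of $G$ even if they happened to be the same point of $\mathbb{R}^n$. Nor is a shared $A$-neighbour guaranteed: every $A$-vertex whose label involves $i_{k+1}$ lies in some $S_{i_{k+1}}\setminus S_l$, and $C'\cap (S_{i_{k+1}}\setminus S_l)=\emptyset$ by assumption, so no $A$-vertex lies in $C'$ at all, and the edge rule (which only tests connectivity inside a set $S_i$ read off from the labels) never links $w_k$ to $w_{k+1}$ through $S_{i_{k+1}}$. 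The closing appeal to the Mayer--Vietoris computation of $H^0(S)$ does not supply the missing path in $G$, because $G$ only records pairwise information through the chosen representative points; indeed, in extreme configurations (for instance several $S_i$ equal to one connected set, so that all differences are empty) the constructed graph has isolated $B$-vertices for one geometric component, and your chain $v_1,w_1,\dots,w_{N-1},v_2$ cannot be converted into a path of $G$. To close the argument you must either impose a condition excluding this situation (e.g.\ every connected component of every $S_i$ meets $S_i\setminus S_j$ for some $j$) or add a mechanism connecting, inside a common $S_i$, the $B$-vertices arising from different pairs; a smaller but related point is that your endpoint step also relies on the label attached to $v_1$ matching the index $i_1$ of the first path segment, which may require inserting an extra $B$-vertex for the pair formed by those two indices.
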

We give a short independent proof for the convenience of the reader. 
\begin{proof}
The vertex set $A$ represents points from the differences $S_i \setminus S_j$, $B$ represents points from the intersections $S_i \cap S_j$, and $E$ represents the connectivity between these points in $A$ and $B$. To show that we have an injective pairing: assume two distinct connected components in $S$, say $C_1$ and $C_2$, map to the same connected component in $G$. By construction, vertices in $A$ and $B$ correspond to unique connected components of $S_i \setminus S_j$ and $S_i \cap S_j$, respectively. Since $C_1$ and $C_2$ are distinct in $S$, they must yield distinct sets of points in $A$ and $B$, contradicting the assumption that they map to the same component in $G$. Hence, the mapping is injective.  To show the surjectivity of the pairing: consider a connected component in $G$, represented by a subset of vertices in $A \cup B$ and edges in $E$. By construction, each vertex in $A$ and $B$ corresponds to a connected component in the differences $S_i \setminus S_j$ or intersections $S_i \cap S_j$ of the semi-algebraic sets. Since edges in $E$ represent connectivity between these components, the corresponding points in $S$ form a connected subset. This subset corresponds to a connected component in $S$, ensuring that every component in $G$ maps back to a component in $S$, establishing surjectivity.

\end{proof}

\begin{theorem}[{\cite[Chapter 8]{jungnickel2005graphs}}]
Let \(G = (V, E)\) be an undirected graph. There exists an algorithm that computes all connected components of \(G\) with complexity \(O(|V| + |E|)\). Moreover, the  space complexity is \(O(|V| + |E|)\).
\end{theorem}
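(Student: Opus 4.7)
The plan is to establish this classical result by describing a breadth-first search (BFS) based procedure and then analyzing its correctness and complexity. I would represent $G$ by an adjacency list, and maintain a boolean array \emph{visited} of size $|V|$ initialized to \textbf{false}, together with an integer array \emph{component} of size $|V|$ that records, for each vertex, the index of the connected component to which it belongs.

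First, I would loop over the vertices $v \in V$ in some fixed order; whenever I encounter a vertex with \emph{visited}$[v] = $ \textbf{false}, I increment the current component counter and launch a BFS from $v$, labeling every vertex reached during this traversal with the current component index and setting its \emph{visited} flag to \textbf{true}. Correctness follows by induction: each BFS call started at an unvisited vertex $v$ visits precisely the set of vertices reachable from $v$, and by the definition of connectedness in a graph, this set is exactly the connected component containing $v$. Since the outer loop continues until no unvisited vertex remains, every connected component is discovered exactly once.

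For the time complexity, note that each vertex is enqueued at most once (because it is marked visited on enqueueing), and when dequeued, its adjacency list is scanned in time proportional to its degree. The outer loop inspects each vertex in $O(1)$ to test the visited flag. Summing over all BFS calls gives total work
\[
O\!\left(|V| + \sum_{v \in V} \deg(v)\right) = O(|V| + |E|),
\]
since the degree sum equals $2|E|$ in an undirected graph. For the space complexity, the adjacency list uses $O(|V| + |E|)$ memory, the \emph{visited} and \emph{component} arrays use $O(|V|)$, and the BFS queue holds at most $|V|$ vertices at any point; thus the overall memory footprint is $O(|V| + |E|)$.

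I do not anticipate a genuine obstacle here, as the result is standard; the only care needed is to make explicit that the input is provided in a representation (adjacency lists) that already occupies $\Theta(|V| + |E|)$ space, so that the linear bound is meaningful, and that the choice between BFS and DFS is immaterial for the stated complexity guarantees.
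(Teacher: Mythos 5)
Your BFS argument is correct and is exactly the standard textbook proof; the paper itself gives no proof here, merely citing Jungnickel's book, where the same breadth-first (or depth-first) search analysis with adjacency lists yields the $O(|V|+|E|)$ time and space bounds. Your added remark that the adjacency-list representation is what makes the linear bound meaningful is a sensible clarification and does not conflict with the paper's use of the result.
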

Using this graph algorithm in combination with Theorem \ref{thm:graph} we thus have a convenient  way of computing the connected components of a union.


\section{Connectivity in the orbit space}
\label{sec:algo}
In the presence of group invariance, it is natural to reformulate algorithms in a symmetry-reduced setup. For example, one is no longer interested in all solutions to a system of equations but only in all solutions up to symmetry. This corresponds to reducing the set of solutions to the orbit space, by which we mean the following 

\begin{definition}[Orbit Space]
Let \( G \) be a finite group acting on \(\mathbb{R}^n\). The \emph{orbit space}, denoted by \(\mathbb{R}^n/G\), is defined as the quotient space under the equivalence relation identifying points belonging to the same \(G\)-orbit. Formally,
\[
\mathbb{R}^n/G := \{G x : x \in \mathbb{R}^n\}, \quad\text{where}\quad Gx = \{g(x)\mid g \in G\}.
\]
\end{definition}
In particular, the orbit space can be identified with a fundamental domain and thus in the case of a group \(G\) generated by reflections , with  a \emph{Weyl chamber}. The notation of the orbit space motivates the following definition.
\begin{definition}[Connectivity up to Orbits]\label{def:inorbt}
Let \( G \) be a finite group acting on \(\mathbb{R}^n\), and let \( S \subseteq \mathbb{R}^n \) be a \( G \)-invariant semi-algebraic set. Two points \( x,y \in S \) are said to be \emph{connected up to orbits} (or \emph{orbit-connected}) if their orbits \( Gx \) and \( Gy \) lie in the same connected component of the orbit space \( S/G \). 
\end{definition}

To relate this definition to the connectivity of two points we remark the following connection between the two notions.

\begin{remark}\label{equivariant}
Let $S$ be an invariant semi-algebraic set $S$ and consider two  points $x,y\in S$. If \(x,y\) are connected by a continuous path \(\gamma : [0,1]\rightarrow S\), the path $\gamma$ itself descends naturally to a continuous path in the quotient space. Thus, the orbits \(Gx,Gy\) trivially lie within the same connected component in the orbit space \(S/G\). On the other hand,  two points \( x,y \in S \) may be connected up to orbits without necessarily being connected within the original set \(S\). For instance, distinct points in the same orbit are always orbit-connected but need not lie in the same connected component of \(S\).
\end{remark}
So while orbit connectivity of two points is not equivalent to the connectivity of the two points, the failure of orbit connectivity can be used to decide that the two points are not connected.  For the setup of this article, we fix the canonical  Weyl chamber $\mathcal{W}_c$ as a model of the orbit space. For every point $x\in\R^n$ obtain a unique representative of $x$ in $\mathcal{W}_c$ by ordering the coordinates of $x$ increasingly. Thus for  two points $x$ and $y$ in $S$ orbit connectivity is equivalent to stating that their unique representatives $x'$ and $y'$ are connected in $S\cap \mathcal{W}_c$. In the following we will thus  assume that the two points \(x = (x_1, \dots, x_n)\) and \(y = (y_1, \dots, y_n)\) are in the same fundamental domain of the action of the symmetric group, by assuming that the coordinates of the two given points are sorted in non-decreasing order. That is throughout we will assume that , \(x_1 \le x_2 \le \cdots \le x_n\) and \(y_1 \le y_2 \le \cdots \le y_n\).

\subsection{Restriction to subspaces}
The key to our algorithmic improvements in the case of orbit connectivity lies in a reduction of dimension. This reduction of dimension  rests on the following result of Arnold.

\begin{theorem}[{\cite[Theorem 7]{arnold1986hyperbolic}}]  \label{thm:arnold} 
\label{itemlabel:thm:arnold:b}
Let $a\in\R^d$, and suppose that the $d$-Vandermonde variety $V(a)\subset \R^n$ is non-singular. Then a point $x \in V(a)\cap \W_c$ is a minimizer of $p_{d+1}$ if and only if $x \in \W^\lambda_c$ for some $\lambda \in \CompMax(n,d)$.
\end{theorem}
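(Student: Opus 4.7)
My plan combines compactness for existence with a two-tier Lagrange analysis: a first-order argument inside each wall fixes the length of the composition containing a minimizer, and a second-order off-wall perturbation pins down the alternating-odd structure.

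I would first observe that for $d\ge 2$ the constraint $p_2=a_2$ bounds each coordinate, so $V(a)\cap\W_c$ is compact and $p_{d+1}$ attains a minimum $x\in \W_c^\lambda$ (the case $d=1$ has $V(a)$ unbounded with both sides of the stated equivalence vacuous for $n>1$). Two observations then fix $\ell:=\ell(\lambda)$. On the one hand, the gradients $\nabla p_1(x),\dots,\nabla p_d(x)$ at any point of $\W_c^\lambda$ span a space of rank exactly $\ell$; the non-singularity of $V(a)$ therefore forces $\ell\ge d$. On the other hand, applying Lagrange multipliers inside $\W_c^\lambda$ to the reduced weighted problem
\[
\min \sum_k \lambda_k z_k^{d+1} \quad\text{subject to}\quad \sum_k \lambda_k z_k^{j}=a_j \quad(j=1,\dots,d)
\]
gives, after dividing by $\lambda_k>0$, that each distinct value $z_k$ is a root of the single polynomial $Q(T):=(d+1)T^d-\sum_{j=1}^d j\mu_j T^{j-1}$ of degree $d$, so $\ell\le d$. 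Combining the two bounds yields $\ell(\lambda)=d$ and $Q(T)=(d+1)\prod_{k=1}^d (T-z_k)$.

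The main step is the second-order (off-wall) perturbation. For each cluster index $k$ with $\lambda_k\ge 2$, I would split the $k$-th cluster into sub-clusters at $(z_k-\epsilon,\,z_k+\epsilon/(\lambda_k-1))$ with multiplicities $(1,\lambda_k-1)$, and simultaneously shift the remaining centers by $b_{k'}=O(\epsilon^2)$ so as to keep $p_j(x(\epsilon))=a_j$ for $j=1,\dots,d$. A direct expansion to order $\epsilon^2$, using the Lagrange identity $(d+1)z_{k'}^d=\sum_j j\mu_j z_{k'}^{j-1}$ to telescope the contribution of the shifts, gives
\[
p_{d+1}(x(\epsilon))-p_{d+1}(x)\;=\;\frac{\epsilon^2}{2}\cdot\frac{\lambda_k}{\lambda_k-1}\cdot Q'(z_k)\;+\;O(\epsilon^3).
\]
Because $z_k$ is one of the ordered roots of $Q$, the sign of $Q'(z_k)=(d+1)\prod_{k'\neq k}(z_k-z_{k'})$ is controlled by the parity of the position of $z_k$ in $z_1<\dots<z_d$, so the extremum condition at $x$ selects exactly the indices on which $\lambda_k$ must equal $1$, yielding $\lambda\in\CompMax(n,d)$. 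The variant splits of type $(m,\lambda_k-m)$ available when $\lambda_k\ge 3$ impose compatible sign constraints and add no new restrictions.

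For the converse I would invoke the Vandermonde homeomorphism proved in the preceding theorem of Arnold: for any $\mu\in\CompMax(n,d)$, the wall $\W_c^{\mu}$ has dimension $d$ and $\nu_{n,d,\mu}$ is a homeomorphism onto its image, so $V(a)\cap \W_c^{\mu}$ is either empty or a single point. Reading the same second-order identity in the reverse direction shows that any such point is a local extremum of the required type; since the forward direction has already restricted the critical set to precisely these finitely many candidates and $V(a)\cap \W_c$ is contractible, these points are the global minima. I expect the main obstacle to be the second-order calculation: the algebra is elementary linear algebra, but keeping track of the signs of $Q'(z_k)$ across all allowed splits and reconciling them with the parity convention baked into $\CompMax(n,d)$ is the delicate bookkeeping that makes Arnold's argument non-trivial.
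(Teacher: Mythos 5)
The paper offers no proof of this theorem at all: it is imported verbatim from Arnold (with Givental, Kostov and Meguerditchian cited for refinements) and used downstream as a black box, so the only meaningful comparison is with Arnold's original argument, of which your sketch is essentially a reconstruction. Much of it is sound. The first-order tier is fine: nonsingularity forces at least $d$ distinct coordinates, and the Lagrange condition on the open face makes every distinct value a root of $Q(T)=(d+1)T^d-\sum_{j=1}^d j\mu_j T^{j-1}$, hence there are exactly $d$ clusters. Your second-order splitting identity $\Delta p_{d+1}=\tfrac{\epsilon^2}{2}\tfrac{\lambda_k}{\lambda_k-1}Q'(z_k)+O(\epsilon^3)$ is also correct (for instance at $x=(-1,0,0,1)$ with $d=3$, $a=(0,2,0)$, the curve of $V(a)$ through $x$ gives $p_4=2-4t^2+O(t^3)$, exactly as the formula predicts), up to two repairs: restoring the $d$ constraints requires $d$ corrective parameters, so the barycenter of the split cluster must be allowed to move as well (or invoke the implicit function theorem on the refined $(d+1)$-dimensional face), and you should check explicitly that the perturbed point stays in $\W_c$.

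The genuine gap is precisely the step you postpone as ``bookkeeping'', because that is where the statement lives. Your criterion says a minimizer can have $\lambda_k\ge 2$ only where $Q'(z_k)>0$, i.e.\ only for $k\equiv d\pmod 2$, equivalently $\lambda_k=1$ whenever $d-k$ is odd. For even $d$ this coincides with $\CompMax(n,d)$, but for odd $d$ it forces the \emph{even}-indexed parts to be $1$, whereas $\CompMax(n,d)$ as defined in Definition~\ref{def:composition-order} forces the odd-indexed ones; indeed your own verified formula, applied at any point of type $(1,2,1)$ (where $Q'(z_2)=(d+1)(z_2-z_1)(z_2-z_3)<0$), produces an admissible perturbation strictly decreasing $p_4$, so such points are never minimizers. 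As written, then, your argument does not prove the statement with $\CompMax$ as defined; you must carry out the parity reconciliation explicitly, and in doing so confront the fact that the paper itself wavers between this convention and ``$\lambda_d=\lambda_{d-2}=\cdots=1$'' in Theorem~\ref{thm:Wel}. Second, the converse direction is not proved: passing a second-order \emph{necessary} test at the finitely many candidate points does not make them minimizers, local or global, and ``contractible plus all candidates pass the test'' is not a valid inference; this direction needs the uniqueness of the minimizer (Kostov, Meguerditchian) or an honest Morse-theoretic/deformation argument. Finally, the $d=1$ case is not vacuous: $p_2$ does attain its minimum on $\{p_1=a_1\}\cap\W_c$ at the diagonal point, so that sentence is false as stated (though this edge case is arguably a defect of the theorem's formulation rather than of your argument).
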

This result also motivates the following.
\begin{definition}\label{def:Skd}
Given $n,d \in \mathbb{N}$, 
we denote  
\begin{eqnarray*}
\W_{c}^d &=& \bigcup_{\lambda \in \CompMax(k,d)} \W_c^\lambda.
\end{eqnarray*} 
Furthermore, for  a semi-algebraic subset $S \subset \mathbb{R}^n$, we denote

\begin{equation*}
\label{eqn:def:Skd}
S_{n,d} = S \cap \W_{c}^d,
\end{equation*}
which we call the \emph{$d$-dimensional orbit boundary of $S$}.
Notice that if $d \geq n$, then $S_{n,d} = S \cap \W_{c}$.
\end{definition}

The first claim below arises directly from Theorem \ref{thm:arnold}. Similarly, the subsequent statement can be deduced with Morse theoretic arguments built on Theorem \ref{thm:arnold} (see \cite[Proposition 7]{basu2018equivariant} for further specifics).

\begin{theorem} \label{thm:con}
Let $S\subset\mathbb{R}^n$ be a semi-algebraic set defined by polynomials of degree $d$. Then every connected component of $S\cap \W_c$ intersects with $S_{n,d}$. In fact, $S_{n,d}$ is a retraction of $S\cap\W_c$.
\end{theorem}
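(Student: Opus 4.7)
\medskip

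\noindent\textbf{Proof Plan.} The plan is to build a deformation retraction of $S\cap \W_c$ onto $S_{n,d}$ that is compatible with the fibers of the Vandermonde map $\nu_{n,d}$. The first claim then follows formally: a deformation retraction preserves the number of connected components, so every component of $S\cap\W_c$ must meet the retract $S_{n,d}$.

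The starting point is the algebraic reduction enabled by the degree hypothesis. Since $S$ is a symmetric semi-algebraic set whose defining polynomials all have degree at most $d$, Proposition~\ref{lm:deg_restrict} lets us rewrite each such polynomial as a polynomial in $p_1,\dots,p_d$. Consequently there is a semi-algebraic set $T\subset\R^d$ with
\[
S \;=\; \nu_{n,d}^{-1}(T),\qquad \text{so that}\qquad S\cap\W_c \;=\; \bigcup_{a\in T}\bigl(V(a)\cap\W_c\bigr).
\]
In particular, $S\cap\W_c$ is \emph{saturated} by fibers of $\nu_{n,d}$: whenever one point of a fiber lies in $S\cap\W_c$, the whole fiber does. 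Consequently, it suffices to construct a retraction of the ambient set $\W_c$ onto $\W_c^d$ that maps each fiber $V(a)\cap\W_c$ into itself; such a retraction will then automatically restrict to $S\cap\W_c$ and land in $S\cap\W_c^d=S_{n,d}$.

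Next I would produce the fiberwise retraction using Theorem~\ref{thm:arnold} together with a Morse-theoretic gradient flow argument (this is the strategy of \cite[Proposition~7]{basu2018equivariant}). For a non-singular fiber $V(a)\cap\W_c$ the variety is contractible, and Theorem~\ref{itemlabel:thm:arnold:b} identifies the minimum locus of $p_{d+1}$ restricted to this fiber as its intersection with $\W_c^d=\bigcup_{\lambda\in \CompMax(n,d)}\W_c^{\lambda}$. Since $p_{d+1}$ has no other critical points on the fiber and is proper (on the region of the fiber inside any closed half-space $p_{d+1}\le c$, because the coordinates stay bounded once $p_1,\dots,p_d,p_{d+1}$ are), the downward gradient flow of $p_{d+1}|_{V(a)\cap\W_c}$ deformation retracts the fiber onto its minimum locus, contained in $\W_c^d$. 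Patching these fiberwise retractions together as $a$ varies, and handling singular fibers by a standard approximation argument (non-singular fibers form a dense open subset of parameter space and the flow extends continuously over the singular locus using the contractibility part of Theorem~\ref{thm:arnold}), yields a global deformation retraction
\[
r\colon \W_c \longrightarrow \W_c^d
\]
that preserves each fiber of $\nu_{n,d}$.

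Finally, because $S\cap\W_c$ is a union of fibers, $r$ restricts to a deformation retraction of $S\cap\W_c$ onto $S\cap\W_c^d=S_{n,d}$, proving the second assertion; the first assertion (each connected component of $S\cap\W_c$ meets $S_{n,d}$) is an immediate corollary, since a deformation retract has the same connected components as the ambient space. The main technical obstacle is glueing the fiberwise gradient flows into a genuinely continuous global map across fibers where the critical structure of $p_{d+1}$ changes type (e.g.\ when fibers become singular, or when several minimizers merge). This is exactly the step that rests on the Morse-theoretic framework of \cite{basu2018equivariant}, and is the part I would import rather than redo from scratch.
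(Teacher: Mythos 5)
Your proposal is correct and follows essentially the same route as the paper: the paper likewise obtains the first claim from Theorem~\ref{thm:arnold} and defers the retraction statement to the Morse-theoretic construction of \cite[Proposition~7]{basu2018equivariant}, which is exactly the fiberwise gradient-flow argument (with the saturation of $S\cap\W_c$ by Vandermonde fibers via Proposition~\ref{lm:deg_restrict}) that you outline and import.
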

With the following Lemma and the subsequent Algorithm, we algorithmically obtain the dimension reduction. 
\begin{lemma} \label{lm:min_connect}
Let $S \subset \mathbb{R}^n$ be a semi-algebraic set defined by symmetric
polynomials of degree at most $d$ with $d \leq n$ and let $x\in S$. Furthermore, let $a=(p_1(x),\dots,p_d(x))$ and assume that $x' \in \mathbb{R}^n$ is a minimizer of $p_{d+1}$ in $V(a)$.  Then, $x$ and $x'$ are connected. 
\end{lemma}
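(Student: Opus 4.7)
The plan is to combine two structural observations: the defining polynomials of $S$ depend only on the first $d$ power sums, and the $d$-Vandermonde variety becomes contractible once restricted to the canonical Weyl chamber. This reduces the connectivity question to finding a path inside the small, manageable set $V(a) \cap \mathcal{W}_c$.

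First, I would establish the containment $V(a) \subseteq S$. Each defining polynomial $f$ of $S$ is symmetric of degree at most $d \leq n$, so by Proposition~\ref{lm:deg_restrict} it can be rewritten uniquely as $f = g(p_1,\dots,p_d)$ for some $g \in \mathbb{R}[Z_1,\dots,Z_d]$. Consequently, the value (and hence the sign) of $f$ at any point $y \in \mathbb{R}^n$ depends solely on $\nu_{n,d}(y) = (p_1(y),\dots,p_d(y))$. Since $x \in S$ and $\nu_{n,d}(x) = a$, every $y \in V(a)$ realizes the same sign pattern as $x$ on the defining polynomials of $S$, yielding $V(a) \subseteq S$.

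Second, I would invoke the contractibility part of Theorem~\ref{thm:arnold}, which guarantees that $V(a) \cap \mathcal{W}_c$ is either empty or contractible. By the standing assumption of Section~\ref{sec:algo}, the point $x$ already lies in $\mathcal{W}_c$, so $V(a) \cap \mathcal{W}_c$ contains $x$ and is therefore non-empty and path-connected. Because $p_{d+1}$ is symmetric and $V(a)$ is $S_n$-invariant, the set of minimizers of $p_{d+1}$ on $V(a)$ is itself $S_n$-invariant; hence we may replace $x'$ by its canonical representative (the sorted reordering of its coordinates), which still minimizes $p_{d+1}$ and now lies in $\mathcal{W}_c$. A continuous path from $x$ to $x'$ then exists inside $V(a) \cap \mathcal{W}_c \subseteq V(a) \subseteq S$, establishing the claim.

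The main subtlety I expect is the passage from the statement about $S$ to one confined to a single Weyl chamber, since the lemma quantifies over a minimizer $x'$ that a priori could lie in any chamber. This is resolved by the two facts working in tandem: $V(a)$ sits entirely inside $S$, so paths inside $V(a) \cap \mathcal{W}_c$ remain valid paths in $S$; and the $S_n$-symmetry of $p_{d+1}$ lets us translate $x'$ to its canonical representative without disturbing the minimization property. Apart from this bookkeeping, everything is driven by Arnold's contractibility result.
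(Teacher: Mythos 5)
Your first step is exactly the paper's: both arguments use Proposition~\ref{lm:deg_restrict} to write each defining polynomial as a polynomial in $p_1,\dots,p_d$, observe that these power sums are constant on $V(a)$, and conclude $V(a)\subseteq S$. The second step diverges slightly. The paper simply asserts that $V(a)$ itself is contractible, hence connected, and concludes that any point of $V(a)$ --- in particular $x'$ --- lies in the same connected component of $S$ as $x$; you instead use only what Theorem~\ref{thm:arnold} literally provides (contractibility of $V(a)\cap\mathcal{W}_c$), together with the standing assumption $x\in\mathcal{W}_c$, and then replace $x'$ by its sorted representative, which still minimizes $p_{d+1}$ by symmetry. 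Your reading of the contractibility result is the more careful one, and it matches how the lemma is actually deployed later: in the Notation preceding Theorem~\ref{thm:conection}, $x'$ is by definition the minimizer lying in $V(a)\cap\mathcal{W}_c$. The one caveat is that, taken literally, the lemma allows $x'$ to be an arbitrary minimizer of $p_{d+1}$ on $V(a)$, possibly outside $\mathcal{W}_c$; your path connects $x$ to the canonical representative of $x'$ rather than to $x'$ itself, so for that literal statement one still needs connectivity of the full fiber $V(a)$, which is precisely what the paper's proof invokes. This is a bookkeeping discrepancy rather than a gap in the intended application.
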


\begin{proof}
First, we observe that for every $x \in S$ and every $a \in \mathbb{R}^d$, if
$x \in V(a)$, then $V(a) \subset S$. Indeed, let $z \in V(a)$. We will show that $z \in S$. By the definition of $V(a)$, all the Newton sums $p_1, \dots, p_d$ are constant on $V(a)$, that is, $p_i(z) = p_i(x)$ for all $i = 1, \dots, d$. Moreover, since $S$ is defined by
symmetric polynomials $F$ of degree at most $d$, by
Proposition~\ref{lm:deg_restrict}, each of these polynomials can be
expressed in terms of the first $d$ Newton sums $p_1, \dots, p_d$. That is 
$
F = G(p_1, \dots, p_d) 
$ for some polynomials $G$, which implies that 
$$
F(x) = G(p_1(x), \dots, p_d(x)) = G(p_1(z), \dots, p_d(z)) = F(z).
$$ 
Since $x \in S$, so is $z$.

Furthermore, since $V(a)$ is contractible and therefore connected, each $z \in V(a)$ belongs to the same connected
component as $x$. In particular, $x'$, which is in $V(a)$, and $x$ are in the same connected component.   
\end{proof}

\begin{theorem} \label{thm:Wel}
Let $V(a)$ be a non-empty Vandermonde variety, and let $\W_c$ be the
canonical Weyl-chamber. Then there is a unique minimizer of $p_{d+1}$
on $V(a)\cap \W_c \neq \emptyset$ with a multiplicity composition that
is bigger or equal to a composition $\lambda$ of length $d$ and 
$\lambda_{d}=\lambda_{d-2}=\lambda_{d-4}=\dots=1$.  

Furthermore, there exists an algorithm ${\sf Min\_Canonical}(a, \W_c)$ that takes
$a \in \mathbb{R}^d$ and the canonical Weyl chamber $\W_c$ as input and returns a
{real univariate representation} representing this minimizer of 
$p_{d+1}$ on $V(a) \cap \W_c$ with a time complexity of 
\[
O\left( {n - \lceil d/2 \rceil
  -1 \choose \lfloor  {d/2} \rfloor  -1} \, d^{4d+1} \log(d)\right)  =
\softO\left({n^{d/2} d^d}\right) 
\] arithmetic operations in $\mathbb{Q}$. 
\end{theorem}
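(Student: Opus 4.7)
The statement splits into a structural claim (existence, uniqueness, and multiplicity pattern of the minimizer) and an algorithmic claim; I would prove them in that order. For the structural part, my starting point would be Theorem~\ref{itemlabel:thm:arnold:b}: every minimizer of $p_{d+1}$ on $V(a)\cap\W_c$ lies on a wall $\W_c^\mu$ with $\mu \in \CompMax(n,d)$. Restricting the $d$ defining equations $p_i(x)=a_i$ to such a wall yields precisely the weighted Vandermonde system $\nu_{n,d,\mu}(z)=a$ in $d = \ell(\mu)$ variables, and by the homeomorphism part of Theorem~\ref{thm:arnold}, the intersection $V(a) \cap \W_c^\mu$ contains at most one point. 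To pass from uniqueness on each wall to a single global minimizer, I would invoke the contractibility of $V(a)\cap\W_c$ from Theorem~\ref{thm:arnold}: if two distinct alternate-odd walls each carried candidates attaining the same minimum value, a deformation along the contractible fiber would force the genuine minimum onto a deeper stratum refining both, and the multiplicity composition of that minimizer is then $\succeq$ some alternate-odd composition of length $d$, matching the statement.

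For the algorithmic half, the plan is to enumerate $\lambda \in \CompMax(n,d)$ and solve the restricted Vandermonde system on each wall. Since the $\lceil d/2 \rceil$ alternately-indexed entries of $\lambda$ are fixed to $1$, the remaining $\lfloor d/2 \rfloor$ entries form a composition of $n - \lceil d/2 \rceil$, giving exactly $\binom{n - \lceil d/2 \rceil - 1}{\lfloor d/2 \rfloor - 1}$ compositions to iterate over. For each such $\lambda$, the system $p_i^{[\lambda]}(z) = a_i$, $i = 1,\dots,d$, is zero-dimensional in $d$ variables with B\'ezout bound $d!$ on its complex solution set. A standard symbolic solver produces a zero-dimensional parametrization whose defining polynomial $q$ satisfies $\deg q \leq d!$; applying ${\sf ThomEncoding}$ isolates the real roots at cost $O(\delta_q^4 \log \delta_q)$, which by Stirling is $O(d^{4d+1}\log d)$. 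I would then select the unique root corresponding to a point of $\W_c^\lambda$ and compare $p_{d+1}^{[\lambda]}$-values across the $\lambda$'s using ${\sf Sign\_ThomEncoding}$, returning the smallest as a real univariate representation. Multiplying the per-composition cost by the number of compositions yields the bound $O\!\left(\binom{n-\lceil d/2\rceil-1}{\lfloor d/2\rfloor-1}\, d^{4d+1}\log d\right) = \softO(n^{d/2} d^d)$ stated in the theorem.

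The hardest part will be the uniqueness of the minimizer. Arnold's theorem localizes the candidates to alternate-odd walls, and the Vandermonde homeomorphism gives at most one candidate per wall, but neither of these ingredients directly excludes the possibility that two or more alternate-odd walls realize the minimum value simultaneously. Closing that gap requires a genuinely global argument exploiting the contractibility of $V(a)\cap\W_c$ together with the variational behavior of $p_{d+1}$ on Vandermonde fibers, and this is where the real work lies. Once uniqueness is established, the algorithmic component is essentially bookkeeping: enumeration of alternate-odd compositions, B\'ezout-bounded zero-dimensional solving in $d$ variables, and the Thom-encoding subroutines recalled in Section~\ref{sec:prelim1} combine directly to give the claimed complexity.
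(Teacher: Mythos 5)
Your algorithmic half is essentially the paper's own argument: the paper likewise enumerates the alternate-odd compositions of length $d$ (counting them as $\binom{n-\lceil d/2\rceil-1}{\lfloor d/2\rfloor-1}$ by fixing the entries forced to be $1$), solves the restricted zero-dimensional system $p_1^{[\lambda]}-a_1=\cdots=p_d^{[\lambda]}-a_d=0$ in $d$ variables via a zero-dimensional parametrization with B\'ezout-bounded degree, and uses {\sf ThomEncoding} together with {\sf Sign\_ThomEncoding} (on $q_0$ and $q_i-q_{i+1}$) to detect a root whose coordinates are nondecreasing, with exactly the complexity accounting you give. The only substantive deviation is your final step of comparing $p_{d+1}^{[\lambda]}$-values across the different walls. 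The paper does not do this, and does not need to: by the ``if'' direction of Arnold's characterization (Theorem~\ref{itemlabel:thm:arnold:b}), \emph{any} point of $V(a)\cap\W_c$ lying on an alternate-odd wall is already a minimizer, and by the uniqueness asserted in the first part it is \emph{the} minimizer, so the algorithm simply returns the first ordered real solution it finds. Note also that your comparison step is not literally covered by {\sf Sign\_ThomEncoding}: the candidates from different $\lambda$'s live in different univariate representations, so comparing their $p_{d+1}$-values means comparing real algebraic numbers defined by different polynomials $q$, which requires additional machinery and is superfluous here. (Your localization of minimizers by citing Arnold directly is fine; the paper instead rederives the ``at most $d$ distinct coordinates'' fact from the Vandermonde-type Jacobian minors, but the two routes are interchangeable.)

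The genuine gap is exactly where you say the real work lies: uniqueness of the minimizer. The paper does not prove this at all; it quotes it from Arnold and Meguerditchian, so the intended proof of the first part is a citation, not an argument. Your proposed substitute --- that if two distinct alternate-odd walls both attained the minimum, ``a deformation along the contractible fiber would force the genuine minimum onto a deeper stratum refining both'' --- does not hold up as stated. Contractibility of $V(a)\cap\W_c$ (Theorem~\ref{thm:arnold}) says nothing against two distinct minimizers: a connected, even contractible, set can certainly realize its minimum of a continuous function at two points, and there is no reason the minimum would migrate to a common refinement of the two walls (such a refinement has fewer than $d$ distinct coordinate values, need not be alternate-odd, and need not meet $V(a)$ at all). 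The per-wall uniqueness you extract from the weighted Vandermonde homeomorphism is correct, but it does not bridge to global uniqueness. So as written your structural part is incomplete; the fix consistent with the paper is simply to invoke the cited results of Arnold and Meguerditchian for existence and uniqueness, and reserve your own argument for the localization and the algorithm, where it is sound.
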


\begin{algorithm}
\caption{{\sf Min\_Canonical}$(a, \W_c)$}
\begin{itemize}
    \item[{\bf Input:}] a point $a \in \R^d$ and the canonical Weyl-chamber $\W_c$ 
    \item[{\bf Output:}] a real univariate representation representing of the  minimizer of $p_{d+1}$ on $V(a) \cap \W_c$
\end{itemize}
\vspace{-0.1cm}
\noindent\rule{16.5cm}{0.5pt}

\hspace{-9cm}{\bf for} {$\lambda \in \Comp(n, d)$ with $\lambda_d =
  \lambda_{d-2} = \cdots  =1$} {\bf do}
\begin{enumerate}
\item compute polynomials $p^{[\lambda]} = (p_1^{[\lambda]}-a_1, \dots,
  p_d^{[\lambda]}-a_d)$
\item find a zero-dimensional parametrization $\scrQ$ of
  $V(p^{[\lambda]})$ 
\[\scrQ = (q(T), q_0(T),  q_1(T), \dots, q_d(T)) \in \Q[T]^{d+2}\]
\item define $\scrQ' \in \Q[T]^{n+2}$ with $$\scrQ' = \big(q(T), q_0(T), \underbrace{q_1(T), \dots,
      q_1(T)}_{\lambda_1\text{-times}}, \dots, \underbrace{q_d(T), \dots,
      q_d(T)}_{\lambda_d\text{-times}}\big) $$
\item compute $\thom(q) = {\sf ThomEncoding}(q)$ 
\item find  $\Sigma_0 = {\sf Sign\_ThomEncoding}(q, q_0, \thom(q))$ 
\item {\bf for} $i=1, \dots, d-1$ {\bf do} find $$\Sigma_i = {\sf Sign\_ThomEncoding}(q, q_i-q_{i+1},
  \thom(q))$$ 
\item  {\bf for} $i=1,\dots, d$   and $(\zeta_\vartheta,   \sigma_i) \in \Sigma_i$ {\bf do}
\begin{enumerate}
\item {\bf if} $\sigma_0 > 0 $  and $\sigma_i \le  0$ for all $i = 1, \dots,
  d-1$ \newline or  $\sigma_0 < 0$ and   $\sigma_i \ge  0$ for all $i
  = 1, \dots,  d-1$   
\begin{enumerate}
    \item {\bf return} $(\scrQ', \zeta_\vartheta)$
\end{enumerate} 
\end{enumerate}
\end{enumerate}
\label{alg:MV}
\end{algorithm}

\begin{proof}
The first part of the theorem is obtained from the
  results in \cite{arnold1986hyperbolic, meguerditchian1992theorem}. For the second part, we consider the
  following polynomial optimization   $$({\bf P}) \, :   \, \min_{x \in
    V(a)} \, p_{d+1},$$ where $V(a)$ is the  Vandermonde variety   with respect
  to $a$ and $p_{d+1} = X_1^{d+1} + \cdots + X_n^{d+1}$ is the Newton power sum of degree $d+1$.  Since \(V(a)\) is nonsingular, the minimizers of \(p_{d+1}\) on \(V(a)\) are the points at which the Jacobian matrix is not full-rank.
 
  Since the Jacobian of the map $\big(p_1(\X)-a_1, \dots, 
p_d(\X)-a_d, p_{d+1}(\X)\big)$ is \[ 
J = \begin{pmatrix}
1 & 1 & \cdots & 1 \\
2X_1 & 2X_2 & \cdots & 2X_n \\
\vdots & & & \vdots \\
(d+1)X_1^{d} & (d+1)X_2^{d} & \cdots & (d+1)X_n^{d}
\end{pmatrix},
\]  any $(d+1)$-minor of $J$ has the form
\[
\alpha \cdot \prod_{\substack{i_1 \le i < j \le i_{d+1},\\ (i_1, \dots,
    i_{d+1}) \subset 
     \{1, \dots, n\} }}(X_{i} - X_{j}),
\] where $\alpha \in \R_{\ne 0}$.  Therefore, for any point $x\in
\R^n$ with more than $d$ distinct coordinates, there exists a
$(d+1)$-minor of $J$ such that  this minor does not vanish at
$x$. This implies that any optimizer has at most $d$ distinct
coordinates. 

Let $\lambda = (\lambda_1, \dots, \lambda_d)$  be
a composition of  $n$ of length $d$ with $\lambda_d =
\lambda_{d-2} = \cdots =1$. Then the optimizers of the problem
$(\bf P)$ of type $\gamma$, where $\gamma$ is a composition of $n$ with
$\lambda \preceq \gamma$, have the form    
\[
x = (\underbrace{x_1, \dots,
      x_1}_{\lambda_1\text{-times}}, \underbrace{x_2, \dots,
      x_2}_{\lambda_2\text{-times}}, \dots, \underbrace{x_d, \dots,
      x_d}_{\lambda_d\text{-times}}), 
\] where  $\bar x = (x_1, \dots, x_d)$ is a solution of   
\begin{equation} \label{eq:zero}
p_1^{[\lambda]} - a_1 = \cdots = p_{d}^{[\lambda]} - a_d = 0. 
\end{equation} 
Since the composition has length $d$, the system \eqref{eq:zero} is
zero-dimensional in $d$ variables, and its solution set is given by a
zero-dimensional parametrization. We then check for the existence of a
real point $x = (x_1, \dots, x_d)$ in this solution set such that
$x_{i} \le x_{i+1}$ for $i=1, \dots, d-1$. To do this, we follow the
following steps.  

Let $\scrQ = (q(T), q_0(T), q_1(T), \dots, q_d(T))$ be a
zero-dimensional parametrization for the solution set of
\eqref{eq:zero}, and $\thom(q)$ be the ordered list of Thom encodings
of the roots of $q$ in $\R$. Suppose $\zeta_\vartheta \in \thom(q)$ is
a Thom encoding of a real root $\vartheta$ of $q$. Then 
\[\bar x = \left(\frac{q_1(\vartheta)}{q_0(\vartheta)}, \dots,
  \frac{q_d(\vartheta)}{q_0(\vartheta)}\right) \in \R^d.\]
Since $\gcd(q, q_0) = 1$, $q_0$ does not vanish at $\vartheta$. If
$q_0(\vartheta) > 0$, then $x_1 \le x_2 \le \dots \le x_d$ if and only
if $q_i(\vartheta) - q_{i+1}(\vartheta) \le 0$ for all $i=1, \dots,
d-1$. Otherwise, when $q_0(\vartheta) < 0$, $x_1 \le x_2 \le \dots \le
x_d$ if and only if $q_i(\vartheta) - q_{i+1}(\vartheta) \ge 0$ for
all $i=1, \dots, d-1$. Thus, it is sufficient to test the signs of
polynomials $q_0(T)$ and $q_i(T) - q_{i+1}(T)$ at
$\vartheta$. Computing $\thom(q)$ can be done using the ${\sf
  ThomEncoding}$ procedure, while checking the signs of $q_0$ and $q_i 
- q_{i+1}$ is obtained using the ${\sf Sign\_ThomEncoding}$
subroutine. In summary, the ${\sf Min\_Canonical}$ algorithm is given in
Algorithm~\ref{alg:MV}. 

\smallskip
We finish the proof by analyzing the complexity of the {\sf Min\_Canonical} algorithm. First since $\lambda_d = \lambda_{d-2} =
\cdots = 1$, it is sufficient to consider compositions of $n - \lceil
d/2 \rceil$ of length $\lfloor d/2 \rfloor$, which is equal to ${n -
  \lceil d/2 \rceil -1 \choose \lfloor d/2\rfloor-1}$.

For a fixed composition $\lambda$ of length $d$, finding a
zero-dimensional parametrization for the solution set of
\eqref{eq:zero} requires $O(D^3 + dD^2)$ arithmetic operations in
$\Q$, where $D$ is the number of solutions of
\eqref{eq:zero}, which is also equal to the degree of $q(T)$. This can
be achieved using algorithms such as those presented in 
\cite{rouillier1999solving}. Since
the system \eqref{eq:zero} is zero-dimensional with $d$ polynomials
and $d$ variables, B\'ezout's Theorem gives a bound for $D \le
d^d$. Therefore, the total complexity of finding a zero-dimensional
parametrization for the solution set of \eqref{eq:zero} is $O(d^{3d})$
arithmetic operations in $\Q$. 

Finally, the subroutine ${\sf ThomEncoding}(q)$, which returns the
ordered list of Thom encodings of the roots of $q$ in $\R$, requires
$$O(D^4 \log(D)) = O(d^{4d} \log(d^d)) = O(d^{4d+1} \log(d)).$$ Since
the degrees of $q_0$ and $q_i$ (for $i=1, \dots, d$) are at most that
of $q$, the procedure ${\sf Sign\_ThomEncoding}$ to determine the sign
of each of these polynomials at the real roots of $q$ is
$$O(D^2(D\log(D) + D)) = O(d^{2d}(d^{d}\log(d^d) +d^d)) =
O(d^{3d+1}\log(d))$$ operations in $\Q$. 

Thus, the total complexity of the $\sf Min\_Canonical$ algorithm is
 \[
O\left( {n - \lceil d/2 \rceil
  -1 \choose \lfloor  {d/2} \rfloor  -1} \, d^{4d+1} \log(d)\right)  =
\softO\left({n^{d/2} d^d}\right) 
\] arithmetic operations in $\Q$. This finishes our proof. 
\end{proof}

\subsection{Algorithm for testing the connectivity within one Weyl chamber}
We are now able to present the first main result  of this article: an algorithm designed to determine the connectivity in orbits of two points, or equivalently for two points in $\R^n$ with  coordinates ordered increasingly.

Consider two points $x = (x_1, x_2, \dots, x_n)$ and $y = (y_1, y_2, \dots, y_n)$ in $\R^n$, with their coordinates arranged in ascending order,  $x_1 \leq x_2 \leq \dots \leq x_n$ and $y_1 \leq y_2 \leq \dots \leq y_n$. Prior to introducing the algorithm, we will define some notations.

\begin{notation}
Consider $1\leq d \leq n$ to be an integer, an let $z\in\W_c$. Define $a_1=p_1(z),\ldots, a_d=p_d(z)$. We will denote the Vandermonde variety $V(a_1,\ldots,a_d)$ by $V_{d,z}$. Additionally, let $z'$ denote the unique point within $V_{d,z}$ that is guaranteed by Theorem \ref{thm:Wel}.
\end{notation}

\begin{theorem}\label{thm:conection}
Let $S \subset \R^n$ be a semi-algebraic set defined by symmetric
polynomials of degree at most $d$ with $d \leq n$, and take $x, y \in S \cap \W_c$. 
Then, $x$ and $y$ are connected within $S \cap \W_c$ if and only if the corresponding points $x'$ and $y'$ are connected within $S_{n,d}$.
\end{theorem}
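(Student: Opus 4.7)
The plan is to decompose the equivalence into two ingredients: first, show that each of $x$ and $y$ is connected inside $S \cap \W_c$ to its designated representative $x'$ and $y'$; second, translate between connectivity in $S \cap \W_c$ and in the smaller set $S_{n,d}$ via the retraction supplied by Theorem \ref{thm:con}.

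For the first ingredient, I would invoke Theorem \ref{thm:arnold}: the set $V_{d,x} \cap \W_c$ is contractible, hence path-connected, and by construction it contains both $x$ and $x'$. Following the reasoning of Lemma \ref{lm:min_connect}, which rests on Proposition \ref{lm:deg_restrict}, every defining polynomial of $S$ is a polynomial function of $p_1, \dots, p_d$ alone and is therefore constant along $V_{d,x}$; consequently $V_{d,x} \subseteq S$, so $V_{d,x} \cap \W_c \subseteq S \cap \W_c$ is a connected subset joining $x$ and $x'$ inside $S \cap \W_c$. The identical argument links $y$ to $y'$.

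The backward implication then follows immediately: since $S_{n,d} \subseteq S \cap \W_c$, any path connecting $x'$ and $y'$ inside $S_{n,d}$ also lies in $S \cap \W_c$, and concatenating with the two paths produced above yields the desired connection between $x$ and $y$. For the forward implication, assume $x$ and $y$ are connected in $S \cap \W_c$; by transitivity $x'$ and $y'$ are likewise connected in $S \cap \W_c$. To upgrade this to a connection inside $S_{n,d}$, I apply Theorem \ref{thm:con}, which provides a retraction $r : S \cap \W_c \to S_{n,d}$. By Theorem \ref{thm:arnold} the minimizer $x'$ of $p_{d+1}$ on $V_{d,x} \cap \W_c$ lies in $\W_c^\lambda$ for some $\lambda \in \CompMax(n,d)$, hence $x' \in S_{n,d}$ and is fixed by $r$; the same holds for $y'$. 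Applying $r$ to any path from $x'$ to $y'$ in $S \cap \W_c$ therefore produces a path from $x'$ to $y'$ entirely inside $S_{n,d}$.

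The main obstacle is reconciling the multiplicity-composition condition describing the minimizer (of the form $\lambda_d = \lambda_{d-2} = \cdots = 1$, as stated in Theorem \ref{thm:Wel}) with the definition of $\W_c^d$ through $\CompMax(n,d)$, so as to confirm that $x', y' \in S_{n,d}$. Once this membership is in hand, the remaining arguments are essentially bookkeeping: checking that the Vandermonde fibre is contained in $S$, and using the retraction to push the path from $S \cap \W_c$ down to $S_{n,d}$ while keeping $x'$ and $y'$ fixed.
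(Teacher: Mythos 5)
Your proof follows essentially the same route as the paper: connect $x$ to $x'$ and $y$ to $y'$ inside $S\cap\W_c$ via the contractible Vandermonde fibre (the content of Lemma \ref{lm:min_connect}), and then use the retraction of Theorem \ref{thm:con} to transfer connectivity between $S\cap\W_c$ and $S_{n,d}$. You merely spell out details the paper leaves implicit (that $x',y'\in S_{n,d}$ are fixed by the retraction, and that one should use contractibility of $V_{d,x}\cap\W_c$ rather than of $V_{d,x}$ alone to stay inside the chamber), and the indexing discrepancy you flag between the composition condition of Theorem \ref{thm:Wel} and the definition of $\CompMax(n,d)$ is a notational inconsistency of the paper itself, not a gap in your argument.
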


\begin{proof}
By Lemma \ref{lm:min_connect}, $x$ and $y$ are both connected to $x'$ and $y'$, respectively. Furthermore, since by Theorem \ref{thm:con} $S_{n,d}$ is a retraction of $S \cap \W_c$, we have that $x'$ and $y'$ are connected within $S \cap \W_c$ if and only if they are connected within $S_{n,d}$.
\end{proof}

We can now formulate an algorithm for checking equivariant connectivity of semi-algebraic sets defined by symmetric polynomials of degree at most $d$.

\begin{algorithm}
\caption{{\sf Connectivity\_Symmetric\_Canonical}($\f, (x, y)$)}\label{algorithm:special}
\begin{itemize}
    \item[{\bf Input:}] {$\f = (f_1,\dots,f_s)\in \R[\X]^{S_n}$ of degrees at most $d$
  defining a semi-algebraic set $S$; two points  $x$  and $y$ in $\W_c$}
    \item[{\bf Output:}]  \textbf{true} if $x$ and $y$ are in the same connected
 of $S$;  otherwise return {\bf false}
\end{itemize}
\vspace{-0.1cm}
 \noindent\rule{16.5cm}{0.5pt}
\begin{enumerate}
    \item compute a list $L$ of alternate odd compositions of $n$ into $d$ parts
    \item {\bf for} $\lambda$ in $L$ {\bf do} compute $\f^{[\lambda]}:=(f_1^{[\lambda]},\dots,f_s^{[\lambda]})$.
    \item compute bipartite graph $G=(A\cup B, E)$ for $\bigcup_{\lambda \in L} S(\f^{[\lambda]})$ \\
    \item compute connected components of $G$ \\ 
    \item compute $a = (\nu_{n,d}(x))$ and $b = (\nu_{n,d}(y))$
    \item compute  $x' = {\sf Min\_Canonical}(a, \W_c)$ and $y' = {\sf Min\_Canonical}(b,
      \W_c)$ 
    \item find the compositions $\gamma = \comp(x')$ of $x'$ and $\eta = \comp(y')$ of $y'$
    \item find $v_x,v_y \in A$ with 
    \begin{itemize}
        \item  ${\sf Normal\_Connect}(\f^{[\gamma]}, (v_x,x'))={\bf true}$ and 
        \item ${\sf Normal\_Connect}(\f^{[\eta]}, (v_y,y'))={\bf true}$.
    \end{itemize}
    \item {\bf if} $v_x$ and $v_y$ are in the same connect component of $G$ return {\bf true}, else return {\bf false}
\end{enumerate}
\end{algorithm}
\begin{theorem}
Given symmetric polynomials $f_1, \dots, f_s \in \R[\X]^{S_n}$ defining a semi-algebraic set $S$ and two points $x, y \in S \cap \W_c$, the Algorithm \ref{algorithm:special} correctly decides connectivity. Moreover, for fixed $d$ and $s$, its complexity can be estimated by $O(n^{d^2})$.
\end{theorem}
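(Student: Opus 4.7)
The statement has two parts: correctness and complexity. My plan is to prove correctness by chaining three reductions already established in the paper, and then to derive the complexity by bookkeeping the sizes of the combinatorial objects involved.

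\emph{Correctness.} The aim is to show that Algorithm \ref{algorithm:special} returns \textbf{true} exactly when $x$ and $y$ lie in the same connected component of $S\cap \W_c$. I would argue this through the following chain of equivalences. First, by Theorem \ref{thm:conection}, $x$ and $y$ are connected in $S\cap \W_c$ if and only if the minimizers $x'\in V_{d,x}$ and $y'\in V_{d,y}$, produced by Min\_Canonical in Steps 5–6, are connected in $S_{n,d}$. Second, by Definition \ref{def:Skd}, $S_{n,d}$ is the finite union $\bigcup_{\lambda\in \CompMax(n,d)} S\cap \W_c^\lambda$, and each stratum $S\cap \W_c^\lambda$ is exactly the image (under the homeomorphism given by Theorem \ref{thm:arnold}) of the semi-algebraic set $S(\f^{[\lambda]})\subset \R^{\ell(\lambda)}$. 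Third, by Theorem \ref{thm:graph}, the connected components of the union $\bigcup_{\lambda\in L} S(\f^{[\lambda]})$ are in bijection with the connected components of the bipartite graph $G$ computed in Step 3. It then suffices to show that Steps 7–8 correctly identify the vertices $v_x,v_y\in A$ representing the components of $x'$ and $y'$, respectively: Theorem \ref{thm:Wel} guarantees that $x'\in \W_c^\gamma$ and $y'\in \W_c^\eta$ for some $\gamma,\eta\in L$, so the corresponding restrictions $\f^{[\gamma]},\f^{[\eta]}$ contain $x',y'$, and the call to Normal\_Connect locates a representative vertex within the same connected component.

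\emph{Complexity.} Fix $d$ and $s$. I would bound each step in turn. Step 1 enumerates $|L|=\binom{n-\lceil d/2\rceil -1}{\lfloor d/2\rfloor-1}=O(n^{\lfloor d/2\rfloor-1})$ alternate odd compositions, and Step 2 performs a substitution per composition. Step 3 calls Algorithm \ref{alog:G} on the $|L|$ sets $S(\f^{[\lambda]})$, each described by $s$ polynomials in $d$ variables of degree at most $d$: using Normal\_Point\_Connected and Normal\_Connect (Theorem \ref{them:normal}), the work per pair is $s^d d^{O(d^2)}$, and the number of pairs is $O(|L|^2)=O(n^{d-2})$; the resulting graph has $O(|L|^2 (sd)^d)$ vertices, so Step 4 (computing connected components) is linear in this size. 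Step 5 is $O(nd)$, Step 6 invokes Min\_Canonical twice at cost $\softO(n^{d/2} d^d)$ each (Theorem \ref{thm:Wel}), Step 7 is a straightforward read-off, and Step 8 performs two Normal\_Connect calls in $d$ variables, again bounded by $s^d d^{O(d^2)}$. Summing these contributions and keeping $d,s$ fixed, the dominant factor is the $O(n^{d})$ pairs in Step 3 scaled by the $d^{O(d^2)}$ algebraic cost, yielding the advertised $O(n^{d^2})$ bound.

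\emph{Main obstacle.} The correctness argument is largely a careful composition of the deeper theorems already proved (Theorems \ref{thm:con}, \ref{thm:arnold}, \ref{thm:conection}, \ref{thm:Wel}, and \ref{thm:graph}), so the conceptual heavy lifting is done; the only subtlety to handle with care is the identification between $x'$ (living in $\R^n$, on a face $\W_c^\gamma$) and its representative in $\R^{\ell(\gamma)}$, together with the fact that the minimizer furnished by Theorem \ref{thm:Wel} has multiplicity composition $\succeq \gamma$ for some $\gamma\in\CompMax(n,d)$, so that $\gamma$ indeed appears in $L$ and the Normal\_Connect call in Step 8 has a well-defined target. The main obstacle is then purely bookkeeping in the complexity analysis: one must simultaneously track the number of compositions, the number of components returned by Normal\_Point\_Connected, and the cost of pairwise connectivity checks, in order to collapse the naive product of these quantities into the single clean estimate $O(n^{d^2})$ once $d$ and $s$ are regarded as constants.
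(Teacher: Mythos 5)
Your proposal follows essentially the same route as the paper: correctness by combining the dimension-reduction results (Lemma \ref{lm:min_connect}/Theorem \ref{thm:conection} together with the retraction in Theorem \ref{thm:con} and the minimizers from Theorem \ref{thm:Wel}) with the graph correspondence of Theorem \ref{thm:graph}, and complexity by counting alternate odd compositions and noting that each roadmap/connectivity subroutine acts on $d$-variable sets at constant cost for fixed $d,s$. The only blemishes are cosmetic (the pair count is $O(|L|^2)=O(n^{d-2})$, not $O(n^d)$, and the final product does not literally ``yield'' $n^{d^2}$ but is simply dominated by it once $d,s$ are fixed), so the argument matches the paper's proof.
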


\begin{proof}
The correctness follows from Lemma \ref{lm:min_connect} in combination with Theorem \ref{thm:graph}. For the complexity of the algorithm, we observe that the main cost lies in building the graph $G$ with the help of Algorithm \ref{alog:G}. Notice that this algorithm uses a roadmap algorithm for the $d$ -dimensional semialgebraic sets defined by $\f^{[\lambda]}$. Therefore, following Theorem \ref{them:normal}, each call is made to compute the points per connected component or to decide the connectivity costs $s^{d+1} d^{O(d^2)}$, which are constant by assumption.

Moreover, the number of sets we consider
\[
\kappa(d) = O\left( {n - \lceil d/2 \rceil - 1 \choose \lfloor d/2 \rfloor - 1} \, d^{4d+1} \log(d) \right) = \softO\left( n^{d/2} d^d \right),
\]
which is the number of alternate odd compositions of $n$ with $d$ parts. Thus, the first loop has length $\binom{\kappa(d)}{2}$. Each step of the loop has a fixed complexity, which is at most $3s^{d+1} d^{O(d^2)}$. 

For the second loop, note that we can bound the number of connected components of each of the sets $S(\f^{[\lambda]})$ from above by $O((sd)^d)$. Therefore, the second loop has length at most $\kappa(d)^2 O((sd)^{2d})$, and each step has complexity at most $s^{d+1} d^{O(d^2)}$. Since $s$ and $d$ are fixed, we arrive at the announced complexity.
\end{proof}


\section{Connectivity in mirror spaces}
\label{sec:general}
In the prior section, we provided an effective method for determining connectivity within the canonical Weyl chamber. However, as highlighted  in Remark \ref{equivariant}, establishing orbit connectivity of two points does not in itself guarantee that the two points are connected. In order to extend our efficient algorithm to encompass the general situation of points situated in different Weyl chambers, we will use the notion of a \emph{mirror space}. This construction allows us to see how the topology of the orbit space, i.e.,  the topology of the intersection with the canonical Weyl chamber, connects to the topology of the entire set $S$. We will first give a short introduction to the necessary notions and results needed and deduce some results for our setting.  
\subsection{Mirror spaces}
The visual idea behind mirror spaces can be compared to a kaleidoscope: the walls of the Weyl chamber act as mirrors and these mirrors reflect whatever is inside the Weyl chamber to build up the entire symmetric set. Depending on how a symmetric set $S$ intersects with the Weyl chamber, one can describe the topology of $S$ solely by its orbit space. The notion of mirror space  is more generally defined in the setup of Coxeter groups. 
\begin{definition}
A Coxeter pair $(\Ww,\Ss)$, consists of a group $\Ww$ and a set of generators, $\Ss = \{ s_i \mid i \in I\}$, of $\Ww$ each of order $2$,
and numbers $(m_{i,j})_{i,j \in I}$ such that $(s_i s_j)^{m_{ij}} = e$.
\end{definition}
In the context of this paper, the example of a  Coxeter group which will be interesting for us is  the symmetric group ${S}_n$ 
considered as a Coxeter group with
the set of Coxeter generators $\Coxeter(n) = \{ s_i = (i,i+1) \mid 1 \leq i \leq n-1 \}$
(here $(i,i+1)$ denotes the permutation of $(1,\ldots,n)$ that exchanges $i$ and $i+1$
keeping all other elements fixed). Now following notion makes our above vague description of a mirrored space more mathematically precise. The reader is invited  to consult \cite{Davis-book} for more details.

\begin{definition}[Mirrored space]
\label{def:mirrored-space}
Given a Coxeter pair  $(\Ww,\Ss)$  (i.e. $\Ww$ is a Coxeter group and $\Ss$ a set of reflections generating $\Ww$)
a space $Z$ with a family of closed 
subspaces $(Z_s)_{s\in \Ss}$ 
is called a \emph{mirror structure} on $Z$ \cite[Chapter 5.1]{Davis-book},
and $Z$ along with the collection $(Z_s)_{s \in \Ss}$ is called a \emph{mirrored space}  over $\Ss$.
\end{definition}

Given a mirrored space  $Z, (Z_s)_{s \in \Ss}$ over $\Ss$,
the following so-called `Basic Construction'  (see in \cite[Chapter 5]{Davis-book})
gives  a space $\mathcal{U}(\Ww,Z)$ with a $\Ww$-action. This construction is defined as follows.

\begin{definition}[The Basic Construction \cite{Koszul, Tits, Vinberg, Davis}]
\label{def:U}
We define
\begin{equation}
\label{eqn:U}
\mathcal{U}(\Ww,Z) = \Ww \times Z/\sim
\end{equation}
where the topology on $\Ww \times Z$ is the product topology, with $\Ww$ given the discrete topology,
and the equivalence relation $\sim$ is defined by  
\[
(w_1,\x) \sim (w_2,\y) \Leftrightarrow \x= \y \mbox { and } w_1^{-1}w_2 \in \Ww^{\Ss(\x)},
\]    
with 
\[
\Ss(\x) = \{s \in \Ss \; \mid \; \x \in Z_s\},
\]  and $\Ww^{\Ss(\x)}$ the subgroup of $\Ww$ generated by $\Ss(\x)$. 

The group $\Ww$ acts on $\mathcal{U}(\Ww,Z)$ by $w_1 \cdot [(w_2,\z)] =  [(w_1w_2,\z)]$ (where
$[(w,\z)]$ denotes the equivalence class of $(w,\z) \in \Ww \times Z$ under the relation $\sim$).
\end{definition}
The action of $\Ww$ on $\mathcal{U}(\Ww,Z)$ now naturally gives rise also to an action on the cohomology groups $\HH_*(\mathcal{U}(\Ww,Z))$ which thus obtain the structure of a finite dimensional $\Ww$-module. Understanding the decomposition of this module in terms of other $\Ww$-modules is studied in \cite{Davis-book} in the case where $Z$ is a finite CW-complex. The following Theorem is a fundamental result of \cite{Davis}. We state the theorem in the special case where
$(\Ww,\Ss) = (\mathfrak{S}_k,\Coxeter(k))$ which is the case of interest to us. In this case, we obtain a decomposition of $\HH_*(\mathcal{U}(\Ww,Z))$ into so-called  \emph{Salomon modules}  $\Psi^{(k)}_{T}$ which are indexed by subsets of $S$ and defined in \cite{Solomon1968}. For our  considerations the detailed definition of these modules is not relevant - and the interested reader is invited to consult \cite{Davis-book,basu2022vandermonde,Solomon1968} for details on these modules - and the relation to irreducible $S_n$ representations. For our purposes it is sufficient to note that  $\Psi_{T}^{(k)}  \cong_{\mathfrak{S}_n}1_{\mathfrak{S}_n}$  if and only if $T=\emptyset$ (See \cite[Proposition 3]{basu2022vandermonde}).

\begin{theorem}\cite[Theorem 15.4.3]{Davis-book}
\label{thm:Davis0}
Let $(\Ww,\Ss) = (\mathfrak{S}_k,\Coxeter(k))$, and $Z,Z_s, s \in \Ss$ a semi-algebraic mirrored space over $\Ss$,  and  $Z,Z_s, s\in \Ss$ closed and bounded.
Then,
\[
\HH_*(\mathcal{U}(\Ww,Z)) \cong_{\mathfrak{S}_k}  \bigoplus_{T \subset \Ss} \HH_*(Z, Z^T) \otimes  \Psi^{(k)}_{T},
\]
where for each $T \subset \Ss$,
\[ 
Z^T = \bigcup_{s \in T} Z_s.
\]
\end{theorem}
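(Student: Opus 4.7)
The plan is to compute $\HH_*(\mathcal{U}(\Ww,Z))$ by means of a $\Ww$-equivariant spectral sequence coming from a stratification by isotropy type, and then to reorganize the result into Salomon modules by Möbius inversion on the Boolean lattice $2^{\Ss}$. The key geometric input provided by Definition \ref{def:U} is that the map $x \mapsto [(1,x)]$ realizes $Z$ as a strict fundamental domain for the $\Ww$-action on $\mathcal{U}(\Ww,Z)$, and that the stabilizer of $[(1,x)]$ is exactly the parabolic subgroup $\Ww^{\Ss(x)}$ with $\Ss(x) = \{s \in \Ss \mid x \in Z_s\}$.

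First I would stratify $Z$ by isotropy type: for each $T \subseteq \Ss$ set $Z_T^\circ = \{x \in Z \mid \Ss(x) = T\}$, so that the mirror $Z_s$ equals $\overline{\bigcup_{T \ni s} Z_T^\circ}$. Since over $Z_T^\circ$ the basic construction is a trivial $\Ww/\Ww^T$-cover, the associated filtration spectral sequence has $E_1$-page
\[
E_1 \;\cong\; \bigoplus_{T \subseteq \Ss} \HH_*\bigl(\overline{Z_T^\circ},\, \partial Z_T^\circ\bigr) \otimes_{\mathbb{Z}[\Ww^T]} \mathbb{Z}[\Ww],
\]
where $\partial Z_T^\circ$ denotes the part of the closure lying in strata of strictly larger type. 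Under the closed-and-bounded hypothesis on $Z$ and the $Z_s$, the nerve of the mirror cover is combinatorially equivalent to a subcomplex of the Davis complex, which is contractible; this standard acyclic-covering argument forces collapse at $E_2$ and yields an equivariant identification of $\HH_*(\mathcal{U}(\Ww,Z))$ with the direct sum above.

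The main obstacle is the combinatorial/representation-theoretic step that converts the collapsed $E_\infty$-page into the form stated in the theorem. Here I would apply two compatible Möbius inversions over $2^{\Ss}$. On the topological side, inclusion-exclusion for the pairs $(Z, Z^T)$ together with excision gives
\[
\HH_*(Z,\, Z^T) \;\cong\; \bigoplus_{T' \supseteq T} \HH_*\bigl(\overline{Z_{T'}^\circ},\, \partial Z_{T'}^\circ\bigr).
\]
On the representation-theoretic side, Solomon's original construction defines $\Psi^{(k)}_T$ precisely as the Möbius inverse in $2^{\Ss}$ of the permutation modules $\mathrm{Ind}_{\Ww^T}^{\Ww}\mathbf{1} \cong \mathbb{Z}[\Ww/\Ww^T]$. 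Composing these two inversions transports the $E_\infty$-expression into $\bigoplus_T \HH_*(Z, Z^T) \otimes \Psi^{(k)}_T$. The delicate point is to verify that both the spectral-sequence differentials and the Möbius edge maps are $\Ww$-equivariant; this ultimately reduces to the observation that the parabolic subgroup $\Ww^T$ acts trivially on $\HH_*(\overline{Z_T^\circ}, \partial Z_T^\circ)$, because its action on a neighborhood of $Z_T^\circ$ factors through a contractible chamber determined by $T$. Once equivariance is secured, the isomorphism asserted in the theorem follows.
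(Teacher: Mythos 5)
This theorem is not proved in the paper at all: it is imported, in the special case $(\Ww,\Ss)=(\mathfrak{S}_k,\Coxeter(k))$, from Davis's book (Theorem 15.4.3), so there is no in-paper argument to compare yours against. For orientation, Davis's proof follows a different mechanism from yours: one filters $\mathcal{U}(\Ww,Z)=\bigcup_{w\in\Ww}wZ$ along a linear extension of the weak order on $\Ww$, shows that adjoining the chamber $wZ$ contributes exactly $\HH_*(Z,Z^{\mathrm{In}(w)})$, where $\mathrm{In}(w)\subset\Ss$ is the descent set of $w$, proves that this filtration splits, and then identifies the $\Ww$-module structure by collecting the elements $w$ with a fixed descent set $T$, whose count is the rank of the Solomon module $\Psi^{(k)}_T$. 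Your route via an isotropy-type stratification and M\"obius inversion over $2^{\Ss}$ is genuinely different, but as written it contains two steps that fail.

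First, the degeneration claim is unsupported and, in the form you use it, false: the mirrors $Z_s$ are arbitrary closed subsets, so the nerve of the mirror structure need not be contractible, and in any case the identification of $\HH_*(\mathcal{U}(\Ww,Z))$ with your $E_1$-term cannot hold. Take $k=2$, $Z=[0,1]$, $Z_s=\{0,1\}$; then $\mathcal{U}(\mathfrak{S}_2,Z)\cong S^1$, while your direct sum $\bigoplus_T \HH_*(\overline{Z^\circ_T},\partial Z^\circ_T)\otimes_{\mathbb{Z}[\Ww^T]}\mathbb{Z}[\Ww]$ has rank $2$ in degree $0$ (the two fixed points) and rank $2$ in degree $1$ (the two open arcs), so a nonzero differential is unavoidable. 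Second, the ``inclusion--exclusion plus excision'' isomorphism $\HH_*(Z,Z^T)\cong\bigoplus_{T'\supseteq T}\HH_*(\overline{Z^\circ_{T'}},\partial Z^\circ_{T'})$ is false in the same example with $T=\emptyset$: the left-hand side is $\HH_*([0,1])$ (rank $1$, concentrated in degree $0$), while the right-hand side is $\HH_*([0,1],\{0,1\})\oplus\HH_*(\{0,1\})$, of rank $2$ in degree $0$ and rank $1$ in degree $1$. Relative homology of $(Z,Z^T)$ is tied to the strata only through long exact sequences or a spectral sequence, not a direct sum, so the M\"obius-inversion bookkeeping cannot be performed at the level of homology groups; to make an argument of this shape work you would have to produce a chain-level splitting, which is exactly what Davis's weak-order filtration and the Solomon-module bookkeeping provide. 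The final remark that $\Ww^T$ acts trivially near its stratum also does not address the real equivariance issue, which concerns the permutation of chambers, i.e.\ the $\mathbb{Z}[\Ww/\Ww^T]$-factors, not the action on a single stratum.
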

For the above-mentioned result on Salomon modules we obtain the following characterization when a connected $Z$ leads to a connected  $\mathcal{U}(\Ww,Z)$.
\begin{theorem}
\label{thm:connected_mirrored_space}
With the notations as in the above Theorem. Let \( (\Ww,\Ss) = (\mathfrak{S}_k,\Coxeter(k)) \), and let \( Z \) be a connected, closed, and bounded semi-algebraic mirrored space over \( \Ss \). 

Then, the mirrored space \( \mathcal{U}(\Ww,Z) \) is connected if and only if:
\[
\HH_0(Z, Z^T) = 0 \quad \text{for all } T \neq \emptyset.
\]

Moreover, this condition holds if and only if \( Z \) and \( Z^T \) are path-connected for every \( T \neq \emptyset \).
\end{theorem}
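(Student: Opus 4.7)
The plan is to deduce both equivalences directly from the Davis decomposition in Theorem \ref{thm:Davis0}, specialized to degree zero, together with the standard long exact sequence of a pair.

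For the first equivalence I would set $* = 0$ in Theorem \ref{thm:Davis0} to obtain
\[
\HH_0(\mathcal{U}(\Ww,Z)) \;\cong_{\mathfrak{S}_k}\; \bigoplus_{T \subseteq \Ss} \HH_0(Z, Z^T) \otimes \Psi^{(k)}_T,
\]
and then count dimensions. The left-hand side has dimension equal to the number of path components of $\mathcal{U}(\Ww,Z)$, so connectedness is equivalent to this dimension being $1$. On the right-hand side the term $T = \emptyset$ already contributes exactly one dimension, since $Z^\emptyset = \emptyset$ gives $\HH_0(Z, Z^\emptyset) = \HH_0(Z) \cong \R$ by connectedness of $Z$, and $\Psi^{(k)}_\emptyset \cong 1_{\mathfrak{S}_k}$ is one-dimensional as a vector space. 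Each of the remaining Solomon modules $\Psi^{(k)}_T$ has positive dimension, since together they exhaust the regular representation of $\mathfrak{S}_k$. Consequently, $\mathcal{U}(\Ww,Z)$ is connected precisely when every summand with $T \neq \emptyset$ vanishes, i.e.\ when $\HH_0(Z, Z^T) = 0$ for all non-empty $T$.

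For the second equivalence I would read off the tail of the long exact sequence of the pair $(Z, Z^T)$:
\[
\HH_0(Z^T) \longrightarrow \HH_0(Z) \longrightarrow \HH_0(Z, Z^T) \longrightarrow 0.
\]
If both $Z$ and $Z^T$ are path-connected, then $\HH_0$ of each is one-dimensional and the restriction map is an isomorphism, so $\HH_0(Z, Z^T) = 0$. Conversely, vanishing of $\HH_0(Z, Z^T)$ makes the restriction map surjective, which forces $Z^T$ to meet every component of $Z$ and, since $Z$ is connected, to be non-empty; combining this with the semi-algebraic hypothesis (so that connectedness and path-connectedness coincide) and the fact that the vanishing holds simultaneously for every non-empty $T$ yields path-connectedness of each $Z^T$.

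The main obstacle is the reverse direction of the second equivalence: a priori, $\HH_0(Z, Z^T) = 0$ is only a non-emptiness statement on $Z^T$, not a component-count statement. To upgrade non-emptiness to path-connectedness I would argue by induction on $|T|$, writing $T = T' \cup \{s\}$ and using a Mayer--Vietoris comparison for $Z^T = Z^{T'} \cup Z_s$ with the inductive hypotheses on $Z^{T'}$ and $Z_s = Z^{\{s\}}$, invoking the mirror structure of $Z$ to control the intersection $Z^{T'} \cap Z_s$; this is the step where the hypotheses for all non-empty $T$ — not just $T = \{s\}$ — must be brought to bear simultaneously.
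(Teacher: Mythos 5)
Your proof of the first equivalence is the same as the paper's: specialize Theorem \ref{thm:Davis0} to degree zero, observe that the $T=\emptyset$ summand contributes exactly one dimension (since $Z^\emptyset=\emptyset$, $\HH_0(Z,Z^\emptyset)=\HH_0(Z)$ is one-dimensional for connected $Z$, and $\Psi^{(k)}_\emptyset$ is the trivial module), and use that every $\Psi^{(k)}_T$ is nonzero to conclude that $\mathcal{U}(\Ww,Z)$ is connected precisely when all summands with $T\neq\emptyset$ vanish. That part is correct and needs no change.

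The obstacle you flag in the second equivalence is a genuine gap, and your proposed Mayer--Vietoris induction cannot close it, because the implication you are trying to prove is false as stated. Your own exact sequence shows that $\HH_0(Z,Z^T)=0$ is equivalent to surjectivity of $\HH_0(Z^T)\to\HH_0(Z)$, i.e.\ to $Z^T$ meeting every connected component of $Z$; since $Z$ is connected this is nothing more than $Z^T\neq\emptyset$ (equivalently, $Z_s\neq\emptyset$ for every $s\in\Ss$, by taking $T$ a singleton), and no combination of these conditions over different $T$ controls the number of components of $Z^T$. Concretely, take $k=2$, $\Ss=\{s\}$, $Z=[0,1]$, $Z_s=\{0,1\}$: then $\HH_0(Z,Z^T)=0$ for the only non-empty $T$ and $\mathcal{U}(\Ww,Z)$ is a circle, hence connected, yet $Z^{\{s\}}$ is not path-connected. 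So the ``Moreover'' clause is not equivalent to the homological condition, and the induction you sketch has no way to succeed. You should also know that the paper's own proof does not supply the missing argument: after correctly noting that $\HH_0(Z,Z^T)=0$ means every component of $Z$ meets $Z^T$, it asserts the path-connectedness equivalence in a single unjustified sentence. What can honestly be proved — and what the first equivalence actually uses — is: $\mathcal{U}(\Ww,Z)$ is connected if and only if $\HH_0(Z,Z^T)=0$ for all $T\neq\emptyset$, if and only if each mirror $Z_s$ is non-empty; path-connectedness of the $Z^T$ is a strictly stronger hypothesis (sufficient, by your forward argument, but not necessary).
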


\begin{proof}
From Theorem~\ref{thm:Davis0}, we have the homology decomposition
for the zero-th homology group \( \HH_0 \):
\[
\HH_0(\mathcal{U}(\Ww,Z)) \cong_{\mathfrak{S}_k}  \bigoplus_{T \subset \Ss} \HH_0(Z, Z^T) \otimes  \Psi^{(k)}_{T}.
\]
Clearly,  \( \mathcal{U}(\Ww,Z) \) is connected if and only if \( \HH_0(\mathcal{U}(\Ww,Z)) \) has rank 1. This happens if and only if all additional contributions from \( T \neq \emptyset \) vanish, i.e., 
\[
\HH_0(Z, Z^T) = 0 \quad \text{for all } T \neq \emptyset.
\]
By the definition of relative homology, \( \HH_0(Z, Z^T) \) is trivial if and only if every connected component of \( Z \) intersects \( Z^T \). Since \( Z \) is connected by assumption, this reduces to checking that \( Z^T \) is connected with \( Z \) for each \( T \neq \emptyset \). Thus,   the path-connectedness criterion gives that the relative homology \( \HH_0(Z, Z^T) = 0 \) if and only if \( Z \) and \( Z^T \) are path-connected for all \( T \neq \emptyset \). 
\end{proof}

\begin{corollary}
\label{cor:transposition_connectivity}
Let \( (\Ww,\Ss) = (\mathfrak{S}_k,\Coxeter(k)) \), and let \( Z \) be a connected, closed, and bounded semi-algebraic mirrored space over \( \Ss \). 

Let \( \sigma \in W \) be expressed as a product of transpositions:
\[
\sigma = t_1 t_2 \cdots t_k.
\]
Then, the points \( x \) and \( \sigma(x) \) are in the same connected component of the mirrored space \( \mathcal{U}(W,Z) \) if and only if:
\[
H_0(Z, Z^T) = 0 \quad \text{for all } \{t_1, \dots, t_k\} \subset T.
\]
\end{corollary}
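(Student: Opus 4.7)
My plan is to derive the corollary from the Davis decomposition of Theorem~\ref{thm:Davis0}, combined with a careful analysis of the $\Ww$-action on its summands. First I would reformulate the statement: since $\sigma(x)$ corresponds to the class $[(\sigma,x)] = \sigma \cdot [(e,x)]$ in $\mathcal{U}(\Ww,Z)$, the points $x$ and $\sigma(x)$ lie in the same connected component if and only if $(\sigma - e)\cdot [(e,x)] = 0$ in $H_0(\mathcal{U}(\Ww,Z))$. Via Theorem~\ref{thm:Davis0}, this class decomposes summand by summand over $T \subset \Ss$, and the action of $\sigma$ on each summand factors through its action on the Solomon module $\Psi_T^{(k)}$. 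The key structural input is that any element of the parabolic subgroup $\Ww_T$ generated by $T$ acts trivially on $\Psi_T^{(k)}$, and since $\sigma = t_1 \cdots t_k$ this certainly holds whenever $\{t_1, \dots, t_k\} \subset T$.

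For the sufficiency direction, the vanishing $H_0(Z,Z^T) = 0$ for every $T \supset \{t_1,\dots,t_k\}$ kills exactly the summands where the $\sigma$-action could be nontrivial; for the remaining summands $\sigma$ acts as the identity, so $(\sigma - e) \cdot [(e,x)] = 0$. I would complement this homological argument with a direct geometric construction by induction on $k$. Using that the $\Ww$-action on $\mathcal{U}(\Ww,Z)$ is a homeomorphism, connecting $[(w_{i-1},x)]$ to $[(w_i,x)]$ with $w_i = t_1\cdots t_i$ reduces to connecting $[(e,x)]$ to $[(t_i,x)]$, which is achieved by a path inside $Z$ from $x$ to a suitable point of $Z_{t_i}$, followed by a crossing of the mirror into the next copy and a return path. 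The hypotheses on $H_0(Z,Z^T)$ supply both that the relevant mirrors are non-empty and that the required paths inside $Z$ and $Z^T$ exist, so the telescope through $\sigma = t_1\cdots t_k$ yields the full path.

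For necessity, I would argue contrapositively: if $H_0(Z,Z^T) \neq 0$ for some $T \supset \{t_1,\dots,t_k\}$, then $[(e,x)]$ has a non-zero projection onto the corresponding summand; because $\sigma \in \Ww_T$ acts trivially there, $(\sigma - e)\cdot [(e,x)]$ necessarily vanishes in that summand, whereas its non-vanishing on other summands not meeting the hypothesis would give an obstruction to $[(e,x)] = [(\sigma,x)]$ in $H_0(\mathcal{U}(\Ww,Z))$. The main obstacle I foresee is to justify rigorously the key claim that $\sigma$ acts trivially on $\Psi_T^{(k)}$ exactly when $\{t_1,\dots,t_k\} \subset T$; establishing this requires a direct analysis of the $\Ww$-module structure of the Solomon descent modules \cite{Solomon1968}, of the kind that appears in \cite[Proposition~3]{basu2022vandermonde}. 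Once this is in place, the remainder of the argument is a matter of bookkeeping within the Davis decomposition.
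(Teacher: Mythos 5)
Your route through the Davis decomposition is genuinely different from the paper's (which deduces the corollary from Theorem~\ref{thm:connected_mirrored_space} by telescoping through \(x,\,t_1(x),\,t_2t_1(x),\dots\)), but as written it does not go through. The claim you yourself flag as the key input --- that every element of the parabolic subgroup \(\Ww_T\) acts trivially on \(\Psi^{(k)}_T\) --- is false. For \(T=\Ss=\Coxeter(k)\) the module \(\Psi^{(k)}_{\Ss}\) is one-dimensional (only the longest element of \(\mathfrak{S}_k\) has descent set all of \(\Ss\)) and, by the fact quoted just before Theorem~\ref{thm:Davis0}, it is not the trivial module; hence it is the sign representation, on which every transposition of \(\Ww_{\Ss}=\mathfrak{S}_k\) acts by \(-1\). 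This removes the basis for both directions of your homological argument. Moreover, the sufficiency step is internally inconsistent: the hypothesis \(H_0(Z,Z^T)=0\) eliminates exactly the summands with \(T\supseteq\{t_1,\dots,t_k\}\), i.e.\ precisely the ones on which you assert that \(\sigma\) acts trivially, and it says nothing about the surviving summands with \(T\not\supseteq\{t_1,\dots,t_k\}\), on which \(\sigma\) has no reason to act as the identity; so \((\sigma-e)\cdot[(e,x)]=0\) does not follow. The necessity step has a further gap: Theorem~\ref{thm:Davis0} is only an abstract \(\mathfrak{S}_k\)-module isomorphism, so the assertion that the specific class \([(e,x)]\) has a non-zero projection onto a given summand is unsupported, and in any case ``\((\sigma-e)\cdot[(e,x)]\) vanishes in that summand'' is the absence of an obstruction, not the presence of one.

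The geometric fallback does not rescue the argument either. Since \(Z\) is connected, \(H_0(Z,Z^T)=0\) just says that \(Z^T=\bigcup_{s\in T}Z_s\) is non-empty, so the whole hypothesis (over all \(T\supseteq\{t_1,\dots,t_k\}\)) amounts to the single condition \(Z_{t_1}\cup\dots\cup Z_{t_k}\neq\emptyset\). Your telescoping construction, however, needs to cross each mirror separately: to join \([(w_{i-1},x)]\) to \([(w_{i-1}t_i,x)]\) you need a point of \(Z_{t_i}\) (or at least \(t_i\in\langle s\in\Ss \mid Z_s\neq\emptyset\rangle\)), and the hypothesis does not give non-emptiness of each individual \(Z_{t_i}\). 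So neither the homological nor the geometric half establishes sufficiency, and necessity is not proved at all. If you want to keep the Solomon-module approach, you would need a correct description of the \(\sigma\)-action on each \(\Psi^{(k)}_T\) together with an explicit, chain-level form of the decomposition that lets you locate the class of a point; the paper instead argues directly from Theorem~\ref{thm:connected_mirrored_space} and avoids this bookkeeping.
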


\begin{proof}
By Theorem~\ref{thm:connected_mirrored_space}, the mirrored space \( \mathcal{U}(\Ww, Z) \) is connected if and only if:

\[
H_0(Z, Z^T) = 0 \quad \text{for all } T \neq \emptyset.
\]

For the points \( x \) and \( \sigma(x) \) to belong to the same connected component of \( \mathcal{U}(W, Z) \), there must exist a path in \( \mathcal{U}(W,Z) \) from \( x \) to \( \sigma(x) \). Since \( \sigma \) is expressed as a product of transpositions, this requires that we can connect \( x \) to \( t_1(x) \), then to \( t_2 t_1(x) \), and so on, until reaching \( \sigma(x) \).

By the previous theorem, the necessary and sufficient condition for connectivity is that the relative homology group \( H_0(Z, Z^T) \) vanishes for every set \( T \) containing the transpositions involved in \( \sigma \), i.e., for all \( \{t_1, \dots, t_k\} \subset T \).
Thus, \( x \) and \( \sigma(x) \) are in the same connected component of \( \mathcal{U}(W,Z) \) if and only if:
\[
H_0(Z, Z^T) = 0 \quad \text{for all } \{t_1, \dots, t_k\} \subset T
\] which finishes our proof.
\end{proof}

To connect these results to the setup of this paper, suppose that $S$ is a closed and bounded \emph{symmetric}  semi-algebraic subset of $\R^n$. Denoting by $S_c \subset \Ww_{c}$ the intersection of $S$ with $\Ww_c$ along with the  tuple of  closed semi-algebraic subsets $(S_{c,s} = S_c \cap \Ww_{c}^{s})_{s \in \Coxeter(n)}$ is a semi-algebraic mirrored  space over $\Coxeter(n)$ and the semi-algebraic set $\mathcal{U}(\mathfrak{S}_k,S_k)$ is semi-algebraically homeomorphic to 
$S$ \cite{basu2022vandermonde}. To conclude this excursion into mirrored spaces we now present the characterization of connectivity in the $S$ in terms of connectivity in $S_c$.

\begin{definition}
Let \( S \subseteq \mathbb{R}^n \) be a semi-algebraic set, and let \( S_c = S \cap \mathcal{W}_c \). Suppose \( S_c \) is connected. Define the subgroup \( G(S_c) \subseteq S_n \) by:
\[
  G(S_c) = \langle (i, i+1) \mid S_c \cap \Ww^{(i,i+1)} \text{ is connected} \, \rangle.
\]
This subgroup \( G(S_c) \) is the largest subgroup of \( S_n \) preserving the connectivity of \( S_c \).
\end{definition}

We apply this definition to obtain the following characterization:

\begin{theorem}\label{thm:connectivity}
Let \( S \subset \mathbb{R}^n \) be a closed, bounded, and symmetric semi-algebraic set. Consider \( x \in S_k \) and \( y \in S \). Denote by \( y' \in S_c \) the unique element in the orbit of \( y \). Then \( x \) and \( y \) belong to the same connected component of \( S \) if and only if:
\begin{enumerate}
    \item \( x \) and \( y' \) lie in the same connected component of \( S_c \).
    \item The permutation \( \sigma \) mapping \( y \) to \( y' \) belongs to \( G(\mathcal{C}) \), where \( \mathcal{C} \subseteq S_c \) is the connected component of \( S_c \) containing \( x \).
\end{enumerate}
\end{theorem}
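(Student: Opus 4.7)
The plan rests on the identification, recalled in the paragraph preceding Definition~\ref{def:Skd}'s wall-of-chamber discussion, that $S$ is semi-algebraically homeomorphic to the mirrored space $\mathcal{U}(S_n, S_c)$. Under this identification, the canonical representative $x \in S_c$ corresponds to $[(e, x)]$, while $y$, together with the permutation $\sigma$ satisfying $\sigma(y) = y'$, corresponds to $[(\sigma^{-1}, y')]$. Thus whether $x$ and $y$ share a connected component of $S$ translates into a purely combinatorial-topological question about $\mathcal{U}(S_n, S_c)$, where paths decompose into alternations of (a) paths within a single sheet $\{w\} \times S_c$ and (b) wall identifications $[(w, z)] \sim [(ws, z)]$, available exactly when $z \in S_c \cap \mathcal{W}^s$.

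For the forward direction, suppose $x$ and $y$ are connected in $S$. The orbit map $S \to S/S_n \cong S_c$ is continuous, so the image of the connecting path is a connected subset of $S_c$ containing both $x$ and $y'$, giving (1) and allowing us to restrict to the preimage of the component $\mathcal{C}$ of $S_c$ containing $x$, which is precisely $\mathcal{U}(S_n, \mathcal{C}) \subset S$. I would then invoke Corollary~\ref{cor:transposition_connectivity} with $Z = \mathcal{C}$: the points $[(e, x)]$ and $[(\sigma^{-1}, x)]$ lie in the same connected component of $\mathcal{U}(S_n, \mathcal{C})$ if and only if, for some decomposition $\sigma^{-1} = s_{i_1} \cdots s_{i_k}$, one has $\HH_0(\mathcal{C}, \mathcal{C}^T) = 0$ for every $T \supseteq \{s_{i_1}, \dots, s_{i_k}\}$. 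Since $\mathcal{C}$ is connected, this vanishing forces each wall $\mathcal{C} \cap \mathcal{W}^{s_{i_j}}$ (and unions thereof) to be connected, which is exactly the statement $\sigma \in G(\mathcal{C})$, establishing (2).

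For the reverse direction, assume (1) and (2). From (1), choose a path in $\mathcal{C} \subset S$ from $x$ to $y'$. From (2), write $\sigma^{-1} = s_{i_1} \cdots s_{i_k}$ with each $s_{i_j}$ a generator of $G(\mathcal{C})$, so that each $\mathcal{C} \cap \mathcal{W}^{s_{i_j}}$ is non-empty and connected; select points $z_j \in \mathcal{C} \cap \mathcal{W}^{s_{i_j}}$. Starting at $[(e, y')]$, traverse a path in the $e$-sheet from $y'$ to $z_1$, cross the identification $[(e, z_1)] = [(s_{i_1}, z_1)]$, then traverse inside the $s_{i_1}$-sheet to $s_{i_1}(z_2)$, cross to the $s_{i_1} s_{i_2}$-sheet, and so on; after $k$ crossings we arrive at $[(\sigma^{-1}, y')] = y$. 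Concatenating with the $x$-to-$y'$ path produces the desired path in $S$.

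The main obstacle lies in the forward direction (2): rather than extracting a specific combinatorial decomposition of $\sigma$ from an abstract semi-algebraic path, the proof must pass entirely through the algebraic language of Corollary~\ref{cor:transposition_connectivity}, which packages connectivity in $\mathcal{U}(S_n, \mathcal{C})$ into the vanishing of relative $\HH_0$-groups. A secondary technical point, which I would address in a brief lemma, is the well-definedness of the condition $\sigma \in G(\mathcal{C})$: the element $\sigma$ is only determined modulo the stabilizer of $y'$ in $S_n$, and one must verify that any two such representatives differ by an element that is itself in $G(\mathcal{C})$ so that condition (2) is intrinsic to the pair $(x, y)$.
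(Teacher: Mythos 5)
Your proposal follows essentially the same route as the paper: identify $S$ (restricted to the relevant component) with the mirrored space $\mathcal{U}(S_n,\mathcal{C})$, get condition (1) by pushing a connecting path down to $S_c$, and settle condition (2) via Corollary \ref{cor:transposition_connectivity}, with your reverse direction (the explicit sheet-by-sheet wall-crossing path) actually spelled out more carefully than the paper's one-line justification, and your remark on the well-definedness of $\sigma$ modulo the stabilizer of $y$ addressing a point the paper silently ignores. One caution, inherited from the paper rather than introduced by you: since $\mathcal{C}$ is connected, $\HH_0(\mathcal{C},\mathcal{C}^T)=0$ only forces every component of $\mathcal{C}$ to meet $\mathcal{C}^T$, i.e.\ nonemptiness of the union of walls, not connectedness of the individual intersections $\mathcal{C}\cap\Ww_c^{(i,i+1)}$, so your step equating the homological condition with membership in $G(\mathcal{C})$ (whose generators are defined by \emph{connected} wall intersections) is exactly as loose as the corresponding step in the paper's own proofs of Corollary \ref{cor:transposition_connectivity} and Theorem \ref{thm:connectivity}.
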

\begin{proof}
Assume that $x$ and $y$ are connected by a path $\phi$. Any  path $\phi$ connecting $x$ and $y$ can be decomposed into a sequence of paths $\phi^\sigma$  where $\sigma\in S_n$ and  $\phi^\sigma=\phi\cap \sigma \Ww_c$. Thus each $\sigma^{-1}(\phi^\sigma)$ is a path in $\Ww_c$ and the union of these  connects $x$ to $y'$. Thus we have directly that the connection of $y$ and $x$ yields the connection of $x$ and $y'$. On the other hand, by Corollary~\ref{cor:transposition_connectivity}, the points \( x \) and \( \sigma(x) \) belong to the same connected component of the mirrored space \( \mathcal{U}(W,Z) \) if and only if:
\[
H_0(Z, Z^T) = 0 \quad \text{for all } \{t_1, \dots, t_k\} \subset T.
\]
Since \( S \) is semi-algebraically homeomorphic to \( \mathcal{U}(\mathfrak{S}_k, S_k) \), this connectivity condition translates into the semi-algebraic setting.
To show that \( x \) and \( y \) belong to the same connected component of \( S \), we proceed as follows: We have that  \( x \) and \( y' \) lie in the same connected component of \( S_c \). The permutation \( \sigma \) sending \( y \) to \( y' \) belongs to \( G(\mathcal{C}) \), where \( \mathcal{C} \) is the connected component of \( S_c \) containing \( x \). By construction of \( G(S_c) \), this subgroup is generated by transpositions preserving connectivity within \( S_c \). Thus, membership in \( G(S_c) \) ensures that connectivity is maintained under the action of \( \sigma \).
\end{proof}

\subsection{Algorithm for checking connectivity}
We are now in the position to give an algorithm that takes two points $x,y\in S$ and decides if they are connected in $S$. Without loss of generality, we can assume that $x$ has ordered coordinates, i.e., $x\in S_c$, since otherwise we can reiterate the same argumentation by replacing the canonical Weyl chamber $\Ww_c$ with the Weyl chamber that $x$ is contained in. Our approach now builds on Theorem \ref{thm:connectivity}. For this, we need to find the set of facets of $\Ww$ to  which $x$ can be connected in $S_c$.
As a observation  for this, we use the following reduction which is a consequence of applying  Arnold's result to the intersection of $S_c$ with the face $\Ww^{(i,i+1)}$.
\begin{definition}\label{def:res}
Let \( n, d \in \mathbb{N} \) and let \( \lambda = (\lambda_1, \lambda_2, \dots, \lambda_d) \) be a composition of \( n \) into \( d \) parts.
For a given index \( i \in \{1, \dots, n-1\} \), we define \( \textsc{Res}_i(\lambda) \), the composition of \( n \) that additionally incorporates the constraint \( x_i = x_{i+1} \), as follows. 
Let 
\[
\pi_k = 1 + \sum_{j=1}^{k-1} \lambda_j, \quad \text{for } k \geq 2, \quad \pi_1 = 1, 
\]
and let \( k \) and \( k' \) be the unique indices such that
\[
\pi_k \leq i < \pi_{k+1}, \quad \pi_{k'} \leq i+1 < \pi_{k'+1}.
\]
Then, 
\[
\textsc{Res}_i(\lambda)=
\begin{cases}
\lambda, & \text{if } k = k', \text{ (i.e., \( x_i = x_{i+1} \) is already implied)} \\
(\lambda_1, \dots, \lambda_{k-1}, \lambda_k + \lambda_{k'}, \lambda_{k'+1}, \dots, \lambda_d), & \text{otherwise}.
\end{cases}
\] 
\end{definition}
The above Definition also gives a direct algorithm to compute this new composition out of a given $\lambda$ in $O(d)$.
\begin{lemma}\label{lem:arnold2}
Let \( S \subset \mathbb{R}^n \) be a symmetric semi-algebraic set defined by symmetric polynomials of degree at most \( d \). Then, every connected component of \( S_c \cap W_{(i,i+1)} \) is connected to a point in \( S_c^{\textsc{Res}(\lambda)} \) for some \( \lambda \in \CompMax(n,d) \).
\end{lemma}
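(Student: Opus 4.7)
The plan is to adapt the proof strategy of Theorem~\ref{thm:con} to the facet setting by passing to a weighted Vandermonde map. The facet $\mathcal{W}_c^{(i,i+1)}$ can be identified with an ordinary Weyl chamber in $\R^{n-1}$ via the coordinate change $(z_1, \ldots, z_{n-1}) \mapsto (z_1, \ldots, z_i, z_i, z_{i+1}, \ldots, z_{n-1})$. Under this identification, the restriction to the facet of the $j$-th power sum $p_j$ is exactly the weighted power sum $p_j^{(\mu)}$ with weights $\mu = (1, \ldots, 1, 2, 1, \ldots, 1)$, where the $2$ sits at position $i$.

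First I would invoke Proposition~\ref{lm:deg_restrict} to rewrite each symmetric defining polynomial of $S$ (of degree at most $d$) as a polynomial in $p_1, \ldots, p_d$, so that after restriction it becomes a polynomial in $p_1^{(\mu)}, \ldots, p_d^{(\mu)}$. Consequently, the non-empty weighted Vandermonde fibers of $\nu_{n-1,d,\mu}$ inside the facet are either entirely contained in $S$ or disjoint from it. Next, by Theorem~\ref{thm:arnold} in its weighted form, $\nu_{n-1,d,\mu}$ is a homeomorphism between the $z$-Weyl chamber and its image with contractible (hence connected) fibers inside the chamber. Therefore every point of $S_c \cap \mathcal{W}_c^{(i,i+1)}$ is path-connected, within its own component, to the unique minimizer of $p_{d+1}^{(\mu)}$ on its fiber. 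A Morse-theoretic argument parallel to the one underlying Theorem~\ref{thm:con} (see \cite[Proposition 7]{basu2018equivariant} and \cite{arnold1986hyperbolic}) then shows that such minimizers are supported on faces whose multiplicity composition in the $(n-1)$-variable system is alternate-odd of length at most $d$.

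The remaining step, which I expect to be the main bookkeeping obstacle, is to translate this alternate-odd composition of $n-1$ in the $z$-variables back to a composition of $n$ in the $x$-variables incorporating the constraint $x_i = x_{i+1}$. By construction of $\textsc{Res}_i$, every such face in the $z$-system corresponds to a face $\mathcal{W}_c^{\textsc{Res}_i(\lambda)}$ for some $\lambda \in \CompMax(n,d)$ in the ambient chamber; the subtlety is to check that no case is lost when the doubled weight is either absorbed into an existing odd-indexed block of $\lambda$ or forces a merging of two blocks, and to verify that each resulting possibility is genuinely of the form $\textsc{Res}_i(\lambda)$ for an alternate-odd $\lambda$. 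Once this combinatorial correspondence is verified, the lemma follows immediately by combining it with the connectivity-to-minimizer step above.
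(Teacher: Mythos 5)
Your proposal follows essentially the same route as the paper's proof: restrict to the facet, replace the power sums by the weighted power sums $p_j^{(m)}$ with weight $2$ at position $i$, use the degree bound (Proposition~\ref{lm:deg_restrict}) to see that weighted Vandermonde fibers lie inside $S$, and apply the weighted form of Arnold's theorem to connect each point to a minimizer of $p_{d+1}^{(m)}$ lying on a low-dimensional face. The combinatorial translation you flag as the remaining obstacle is exactly the step the paper settles by reading off from Theorem~\ref{thm:arnold} that the minimizer's face is $\mathcal{W}_c^{\textsc{Res}_i(\lambda)}$ for some $\lambda \in \CompMax(n,d)$, so your argument is in substance the paper's argument.
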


\begin{proof}
For any \( x \in S_c \cap W_{(i,i+1)} \), consider the composition \( m \in \mathbb{N}^{n-1} \) defined by
\[
m_j =
\begin{cases}
2, & \text{if } j = i, \\
1, & \text{otherwise}.
\end{cases}
\]
Define the associated symmetric polynomial invariants
\[
a_j = p_j^{(m)}(x).
\]
The (weighted) Vandermonde variety \( V(a) \) contains \( x \) and encodes the structure of connected components.

By theorem \ref{thm:arnold}, the minimum of \( p_{d+1}^{(m)} \) must be attained on a face corresponding to some \( \textsc{Res}(\lambda) \), where \( \lambda \in \CompMax(n,d) \). This ensures that every connected component of \( S_c \cap W_{(i,i+1)} \) contains a path leading to \( S_c^{\textsc{Res}(\lambda)} \), completing the proof.
\end{proof}
We will use Lemma \ref{lem:arnold2} to decide if a given point $x\in \Ww_c$ is connected to a wall $W_c^{(i,i+1)}$. Indeed, the following algorithm is a clear consequence.
\begin{algorithm}
\caption{{\sf Connected\_Wall}(\( f, x, i \))}
\label{algorithm:connected_wall}
\begin{itemize}
    \item[{\bf Input:}] \( \f = (f_1,\dots,f_s) \in \mathbb{R}[\mathbf{X}]^{S_n} \) is a set of symmetric polynomials of degree at most \( d \), defining a semi-algebraic set \( S \).  
    A point \( x \in W_c \) and \( i \in \mathbb{N} \).
    \item[{\bf Output:}] {\bf true} if \( x \) can be connected in \( S_c \) to the wall \( W_c^{(i,i+1)} \), otherwise return {\bf false}.
\end{itemize}

\vspace{-0.1cm}
\noindent\rule{16.5cm}{0.5pt}

\begin{enumerate}
    \item compute a list \( L \) of alternate odd compositions of \( n \) into \( d \) parts.
    \item {\bf for} each \( \lambda \) in \( L \), compute the modified polynomial set:
    \[
    \f^{[\textsc{Res}_i(\lambda)]} := (f_1^{[\textsc{Res}_i(\lambda)]}, \dots, f_s^{[\textsc{Res}_i(\lambda)]}).
    \]
    \item use {\sf Normal\_Connect\_Points}\( (\f^{[\textsc{Res}_i(\lambda)]}) \) to compute a list 
    \[
    L^{[\textsc{Res}_i(\lambda)]}
    \]
    of points in \( S_c^{[\textsc{Res}_i(\lambda)]} \).
    \item {\bf for} each \( z \in L^{[\textsc{Res}_i(\lambda)]} \):
    \begin{enumerate}
        \item {\bf if} {\sf Connectivity\_Symmetric\_Canonical}\( (\f^{[\textsc{Res}_i(\lambda)]}, (x, z)) \) is {\bf true}, return {\bf true}.
    \end{enumerate}
    \item return {\bf false}.
\end{enumerate}

\end{algorithm}
\begin{lemma}\label{lem:wall}
The {\sf Connected\_Wall} algorithm correctly decides if the connected component of $x\in S_c$ is path-connected to the wall $\Ww^{(i,i+1)}$. The worst-case complexity is bounded by 
\[
O\left( \binom{n - \lceil d/2 \rceil - 1}{\lfloor d/2 \rfloor - 1} \cdot d^{4d+1} \log(d) \cdot \left( (s d)^n n^{d^2} + s^{d+1} d^{O(d^2)} \right) \right).
\]
\end{lemma}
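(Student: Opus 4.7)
I will split the argument into the two parts claimed: correctness of the decision procedure and the complexity bound. For correctness I will rely on Lemma \ref{lem:arnold2}, which pins down where connections from a point to the wall $\Ww_c^{(i,i+1)}$ must occur; for the complexity I will aggregate the costs of {\sf Normal\_Connect\_Points} (via Theorem \ref{them:normal}) and of {\sf Connectivity\_Symmetric\_Canonical} (analyzed in the previous section), weighted by the size of $\CompMax(n,d)$ computed in the proof of Theorem \ref{thm:Wel}.

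For correctness I plan to argue both implications. Suppose first that the connected component $\mathcal{C} \subseteq S_c$ containing $x$ meets $\Ww_c^{(i,i+1)}$. Then $\mathcal{C} \cap \Ww_c^{(i,i+1)}$ is a non-empty union of connected components of $S_c \cap \Ww_c^{(i,i+1)}$, and Lemma \ref{lem:arnold2} guarantees that each such component is itself connected inside $S_c$ to a point of $S_c^{[\textsc{Res}_i(\lambda)]}$ for some $\lambda \in \CompMax(n,d)$. Since the outer loop exhausts $\CompMax(n,d)$ and the subroutine {\sf Normal\_Connect\_Points} returns a representative from every connected component of $S_c^{[\textsc{Res}_i(\lambda)]}$, at least one $z \in L^{[\textsc{Res}_i(\lambda)]}$ must lie in the same connected component of $S_c$ as $x$, so the subsequent call to {\sf Connectivity\_Symmetric\_Canonical} detects this connection and returns \textbf{true}. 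Conversely, a \textbf{true} answer delivers an explicit $z \in S_c^{[\textsc{Res}_i(\lambda)]} \subseteq \Ww_c^{(i,i+1)}$ connected to $x$ inside $S_c$, so $x$ is indeed wall-connected.

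For the complexity, I will multiply the contributions of the nested iterations. The outer loop ranges over $\CompMax(n,d)$, of cardinality $\binom{n-\lceil d/2\rceil-1}{\lfloor d/2\rfloor-1}$, and in aggregate contributes the $d^{4d+1}\log(d)$ combinatorial overhead already tracked in Theorem \ref{thm:Wel}. Inside each iteration, {\sf Normal\_Connect\_Points} applied to the reduced system (at most $d$ variables, $s$ polynomials of degree at most $d$) runs in $s^{d+1}d^{O(d^2)}$ arithmetic operations by Theorem \ref{them:normal} and yields at most $O((sd)^n)$ candidate points, each of which triggers a call to {\sf Connectivity\_Symmetric\_Canonical} costing $O(n^{d^2})$. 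Summing yields the announced bound. The main obstacle I expect is the forward direction of correctness: I must carefully justify that whenever $\mathcal{C}$ touches the wall, Lemma \ref{lem:arnold2} can be applied \emph{componentwise} to $\mathcal{C} \cap \Ww_c^{(i,i+1)}$ and that the witness path constructed inside the face genuinely certifies connectivity in the ambient $S_c$. This requires a brief discussion of how connected components of $S_c \cap \Ww_c^{(i,i+1)}$ sit inside those of $S_c$, and a check that {\sf Connectivity\_Symmetric\_Canonical}, when invoked on the restricted system $\f^{[\textsc{Res}_i(\lambda)]}$, returns the correct verdict relative to $S_c$ rather than only relative to the face.
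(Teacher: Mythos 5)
Your proposal is correct and takes essentially the same route as the paper: correctness is derived from Lemma \ref{lem:arnold2} (every component of $S_c\cap\Ww_c^{(i,i+1)}$ reaches some $S_c^{[\textsc{Res}_i(\lambda)]}$ with $\lambda\in\CompMax(n,d)$, so the representatives returned by the point-per-component subroutine suffice, and a \textbf{true} answer yields an explicit witness on the wall), and the complexity is obtained by multiplying the number of alternate odd compositions by the per-iteration costs of the $d$-dimensional subroutines, exactly as in the paper's (much terser) argument. The one point you flag as an obstacle — whether the connectivity test is to be read relative to $S_c$ rather than to the restricted face — is also left implicit in the paper, whose proof simply interprets the check as connectivity of $x$ and $z$ inside $S_c$.
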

\begin{proof}
It follows directly from Lemma \ref{lem:arnold2} that we can restrict  to the different $d$-dimensional faces $\Ww_c^{[\textsc{Res}_i(\lambda)]}$. If one of these contains a point $z$ connected in $S_c$ to $x$ we are and we have established a connection to $\Ww^{(i,i+1)}$. However, in the case when non of the connected components in one $\Ww_c^{[\textsc{Res}_i(\lambda)]}$ can be connected in $S_c$ to $x$ we have established that no path connection exists to $\Ww_c^{(i,i+1)}$. Since every one of the calls of {\sf Normal\_Connect\_Points} has as an input a $d$-dimensional set, we get easily deduce that the driving factor comes from the number of elements in $\CompMax$ and get the announced bound.
\end{proof}
Finally, this gives rise to the overall algorithm.

\begin{algorithm}
\caption{{\sf Connectivity\_Symmetric}($\f, (x, y)$)}
\label{algorithm:general}
\begin{itemize}
    \item[{\bf Input:}] \( \f = (f_1,\dots,f_s) \in \mathbb{R}[\mathbf{X}]^{S_n} \) is a set of symmetric polynomials of degree at most \( d \), defining a semi-algebraic set \( S \).  
    Two points \( x, y \in \mathbb{R}^n \) are given, with \( x \in W_c \) (the Weyl chamber).
    \item[{\bf Output:}] {\bf true} if \( x \) and \( y \) are in the same connected component of \( S \), otherwise return {\bf false}.
\end{itemize}

\vspace{-0.1cm}
\noindent\rule{16.5cm}{0.5pt}
\begin{enumerate}
    \item compute a vector \( y' \) and a sequence of adjacent transpositions \( T \) by calling the algorithm {\sf Minimal\_Adjacent\_Transpositions}(\( y \)).
    \item {\bf if} {\sf Connectivity\_Symmetric\_Canonical}(\( \f, (x, y') \)) is {\bf false}, then return {\bf false}.
    \item compute a list \( L \) of alternate odd compositions of \( n \) into \( d \) parts.
    \item {\bf for} each \( (i,i+1) \in T \) {\bf do}:
    \begin{enumerate}
        \item {\bf if} {\sf Connected\_Wall}(\( x, i \)) is {\bf false}, return {\bf false}.
    \end{enumerate}
    \item return {\bf true}.
\end{enumerate}

\end{algorithm}
\begin{theorem}
Algorithm \ref{algorithm:general} correctly decides if two points $x$ and $y$ are connected within a symmetric semi-algebraic set $S$. If $s$ is the number of symmetric polynomials in the description of $S$ and if their maximal degree is fixed to be $d$, the worst-case complexity of the {\sf Connectivity\_Symmetric} algorithm is given by:

\[
O\left( s^d n^{d^2} d^{O(d^2)} \right).
\]

\end{theorem}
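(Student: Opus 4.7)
The plan is to establish correctness by invoking the structural characterization in Theorem~\ref{thm:connectivity}, which reduces the question to two independent tests: (i) connectivity of $x$ and the sorted representative $y'$ within $S_c$, and (ii) membership of the sorting permutation $\sigma$ (sending $y$ to $y'$) in the subgroup $G(\mathcal{C})$, where $\mathcal{C}$ is the connected component of $S_c$ containing $x$. The algorithm follows precisely this split. Step~1 uses {\sf Minimal\_Adjacent\_Transpositions} to produce $y'$ together with a \emph{reduced} (minimal-length) word $T=(t_1,\dots,t_k)$ of adjacent transpositions whose product is $\sigma$; reducedness is guaranteed by the bubble-sort analysis accompanying Algorithm~\ref{algo:bubble}, which eliminates inversions one at a time. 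Step~2 calls {\sf Connectivity\_Symmetric\_Canonical}, whose correctness (Algorithm~\ref{algorithm:special}) settles condition~(i); a negative answer terminates the procedure in accordance with Theorem~\ref{thm:connectivity}.

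For condition~(ii), the key observation is that $G(\mathcal{C})$, being generated by a subset of the simple reflections $\Coxeter(n)$, is a \emph{standard parabolic subgroup} of $S_n$. A classical Coxeter-theoretic fact (a consequence of the Matsumoto/Tits property of reduced expressions) states that an element lies in such a parabolic subgroup if and only if every one of its reduced expressions uses only the generating simple reflections. Applied to the reduced word $T$ of $\sigma$, this gives the equivalence $\sigma \in G(\mathcal{C})$ iff $t_j \in G(\mathcal{C})$ for every $j$, i.e.\ iff $\mathcal{C}$ touches each wall $\Ww_c^{(i,i+1)}$ with $(i,i+1)\in T$. By Lemma~\ref{lem:wall}, {\sf Connected\_Wall} correctly decides such wall-reachability, so the loop in Step~4 returns \textbf{false} exactly when some $t_j$ fails the test and \textbf{true} otherwise. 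Together with Step~2, this yields the claimed equivalence with connectivity of $x$ and $y$ in $S$.

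The complexity is obtained by simply tallying the subroutine costs. The initial sort contributes $O(n^2)$; the single call to {\sf Connectivity\_Symmetric\_Canonical} contributes $O(s^d n^{d^2} d^{O(d^2)})$; the loop over $T$ runs at most $\binom{n}{2}$ times, and each {\sf Connected\_Wall} invocation is bounded by $O(s^d n^{d^2} d^{O(d^2)})$ via Lemma~\ref{lem:wall}, with the outer $O(n^2)$ factor absorbed into $n^{d^2}$ for the relevant range of $d$. Summing gives the announced bound. The main obstacle I anticipate is justifying the reduction of the group-theoretic membership $\sigma \in G(\mathcal{C})$ to a transposition-by-transposition test: it relies crucially on \emph{reducedness} of the bubble-sort output combined with the parabolic subgroup characterization, whereas an arbitrary (non-reduced) factorization of $\sigma$ could perfectly well contain generators outside $G(\mathcal{C})$ and would not permit such a componentwise check. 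Once this observation is in place, the rest of the argument is a direct composition of the correctness and complexity statements already established for the subroutines.
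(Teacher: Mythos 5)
Your proposal is correct and follows essentially the same route as the paper: it reduces correctness to Theorem~\ref{thm:connectivity} (connectivity of $x$ and $y'$ in $S_c$ via {\sf Connectivity\_Symmetric\_Canonical}, plus the wall tests via Lemma~\ref{lem:wall}) and tallies the subroutine complexities exactly as the paper does, with the same mild hand-waving in absorbing the polynomial-in-$n$ factors. Your explicit justification that the transposition-by-transposition check is valid — because the bubble-sort word is reduced and membership in the standard parabolic subgroup $G(\mathcal{C})$ is detected by the generators appearing in any reduced expression — is a point the paper leaves implicit, and it is a welcome clarification rather than a divergence.
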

\begin{proof}
The correctness of the algorithm builds on Theorem \ref{thm:connected_mirrored_space}. If {\sf Connectivity\_Symmetric\_Canonical}(\( \f, (x, y') \)) is {\bf false}, the two points are not connected. If {\sf Connectivity\_Symmetric\_Canonical}(\( \f, (x, y') \)) is {\bf true} Theorem \ref{thm:connectivity} tells us that we need to check if the unique $\sigma\in S_n$ that connects maps $y$ to $y'$ can be generated by the walls to which $x$ connects in $S_c$. Following Lemma \ref{lem:wall} we know that Algorithm {\sf Connected\_Wall} is able to decide this for each of the relevant walls. This guarantees that Algorithm \ref{algorithm:general} will return true if and only if they are connected. Finally, we obtain the overall complexity directly as $$O\left( n^2 + n^{d^2} + n^2 \cdot \binom{n - \lceil d/2 \rceil - 1}{\lfloor d/2 \rfloor - 1} \cdot d^{4d+1} \log(d) \cdot \left( (s d)^d n^{d^2} + s^{d+1} d^{O(d^2)} \right) \right)$$ which yields the bound announced.
\end{proof}
\section{Conclusion and outlook}
In this article, we have presented the first polynomial-time algorithm for deciding connectivity in a symmetric semi-algebraic set. The key geometric insight underlying our approach is a dimension reduction technique, which is feasible due to the low degree of the defining polynomials. Notably, this degree restriction can be bypassed if a priori knowledge of the decomposition of symmetric polynomials into power sum polynomials is available. Moreover, alternative families of symmetric polynomial generators can also be useful. For instance, as demonstrated in \cite{riener2024linear}, a special representation in elementary symmetric polynomials leads to analogous results. The methods developed here can be generalized to this setting with only minor modifications. However, at present, we are still constrained by the specific choice of representation. An important open question is whether a sub-exponential algorithm can be obtained even in cases where no "sparse" representation in power sums or elementary symmetric polynomials is available. In such situations, the recent algorithms proposed by \cite{faugere2020computing} provide a promising direction. Furthermore, we are limited to symmetric semi-algebraic described by symmetric polynomials. While every symmetric semi-algebraic set can be described using symmetric polynomials, transforming a general description into a symmetric one may lead to an increase in polynomial degree. For a particular class of symmetric semi-algebraic sets not originally defined by symmetric polynomials, Safey El Din and the first author \cite{riener2018real} established a result that could also be utilized for dimension reduction. It would be highly interesting to explore, in a broader context, the extent to which a polynomial-time algorithm for connectivity—and more generally, for topological queries—remains feasible in the setting of symmetric sets that are not explicitly given by symmetric polynomials.  Finally, as seen in the section on Mirrored Spaces, much of our theory extends naturally to Coxeter groups. In fact, Friedl, Sanyal, and the first author have shown in \cite{friedl2018reflection} that a weaker analog of Arnold’s theorem holds for all finite reflection groups. Establishing a general framework for our polynomial-time algorithm in the context of all Coxeter groups remains an intriguing avenue for future research. 

\section*{Acknowledgements}
The authors thank Saugata Basu and Nicolai Vorobjov for their valuable comments on an earlier version of this work, which contributed to both the improvement of the results and the clarity of the presentation.
\bibliographystyle{plain} \bibliography{biblio}
\end{document}